\documentclass[reqno]{amsart}

\usepackage[letterpaper,top=2cm,bottom=2cm,left=3cm,right=3cm,marginparwidth=1.75cm]{geometry}
\usepackage{amsmath}
\usepackage{amsthm}
\usepackage{amssymb}
\usepackage{amsbsy}
\usepackage{indentfirst}
\usepackage{fullpage}
\usepackage{latexsym}
\usepackage{mathtools}
\usepackage{eqparbox}
\usepackage{algorithm}
\usepackage{algpseudocode}
\usepackage{float}
\usepackage{cite}
\usepackage{nicefrac}
\usepackage{array}
\usepackage{multirow}
\usepackage{hhline}
\usepackage{setspace}
\usepackage{graphicx}
\usepackage{tikz}
\usepackage[colorlinks=true, allcolors=blue]{hyperref}

\definecolor{Xcolor}{HTML}{FF0049}%FF312E
\definecolor{Ycolor}{HTML}{00BBEE}

\newcommand\MyBox[2]{
	\fbox{\lower0.75cm
		\vbox to 3cm{\vfil
			\hbox to 3cm{\hfil\parbox{1.4cm}{#1\\#2}\hfil}
			\vfil}%
	}%
}
\makeatletter
\newcommand\fs@ruled@notop{\def\@fs@cfont{\bfseries}\let\@fs@capt\floatc@ruled
	\def\@fs@pre{}%
	\def\@fs@post{\kern2pt\hrule\relax}%
	\def\@fs@mid{\kern2pt\hrule\kern2pt}%
	\let\@fs@iftopcapt\iftrue}
\renewcommand\fst@algorithm{\fs@ruled@notop}
\makeatother
\newcommand{\Input}[1]{\Statex \Comment{\parbox[t]{.978\linewidth}{\textbf{input:} #1}}}
\newcommand{\Output}[1]{\Statex \Comment{\parbox[t]{.978\linewidth}{\textbf{output:} #1}}}
\newcommand{\hrulestart}{\vspace{.5cm} \hrule}
\newcommand{\hruleend}{\hrule \vspace{1cm}}
% Notations
\DeclareMathOperator*{\argmin}{arg\,min}

\DeclareMathOperator{\diam}{diam}

\DeclareMathOperator{\dis}{dis}

\newcommand{\defeq}{\buildrel \mathrm{def}\over =}
\newcommand{\N}{\mathbb{N}}

\newcommand{\R}{\mathbb{R}}
\newcommand{\MM}{\mathcal{M}}
\newcommand{\RR}{\mathcal{R}}

\newcommand{\dH}{d_{\mathcal{H}}}
\newcommand{\dlinf}{d_{l^\infty}}
\newcommand{\dGH}{d_{\mathcal{GH}}}
\newcommand{\dmGH}{\widehat{d}_{\mathcal{GH}}}
\newcommand{\K}{\mathcal{K}}

\newcommand{\DX}{D^X}
\newcommand{\DY}{D^Y}
\newcommand{\row}[1]{\mathrm{row}_{#1}}
\newcommand{\I}[1]{\langle #1 \rangle}
\newcommand{\ps}{\mathrm{PSPS}}
% Theorems
\newtheorem{theorem}{Theorem}

\newtheorem{lemma}{Lemma}
\newtheorem{claim}{Claim}
\theoremstyle{definition}

\theoremstyle{remark}
\newtheorem*{remark}{Remark}
\newtheorem*{rep@theorem}{\rep@title}
\newcommand{\newreptheorem}[2]{%
	\newenvironment{rep#1}[1]{%
		\def\rep@title{#2 \ref{##1}}%
		\begin{rep@theorem}}%
		{\end{rep@theorem}}}
\makeatother
\newreptheorem{theorem}{Theorem}
\newreptheorem{lemma}{Lemma}
\newreptheorem{claim}{Claim}
\numberwithin{equation}{section}

\title{Efficient estimation of the modified Gromov--Hausdorff distance between unweighted graphs}
\author{Vladyslav Oles, Nathan Lemons, Alexander Panchenko}
%\address{Department of Mathematics, Washington State University, Pullman, WA, 99164}
%\address{}
%\curraddr{}
%\email{vladyslav.oles@wsu.edu}

\begin{document}

\maketitle

\begin{abstract}
Gromov--Hausdorff distances measure shape difference between the objects representable as compact metric spaces, e.g. point clouds, manifolds, or graphs. Computing any Gromov--Hausdorff distance is equivalent to solving an NP-hard optimization problem, deeming the notion impractical for applications. In this paper we propose a polynomial algorithm for estimating the so-called modified Gromov--Hausdorff (mGH) distance, a relaxation of the standard Gromov--Hausdorff (GH) distance with similar topological properties. We implement the algorithm for the case of compact metric spaces induced by unweighted graphs as part of Python library \verb|scikit-tda|, and demonstrate its performance on real-world and synthetic networks. The algorithm finds the mGH distances exactly on most graphs with the scale-free property. We use the computed mGH distances to successfully detect outliers in real-world social and computer networks.
\end{abstract}

\section*{List of notation}
\def\arraystretch{1.5}
\begin{table}[H]
\begin{tabular}{rp{0.8\textwidth}}
$\langle n \rangle$ & index set $\{1, \ldots, n\}$, for $n \in \N$.\\
$\lceil a \rceil$ & the ceiling of $a \in \R$.\\ %[.3em]
$\lfloor a \rceil$ & the nearest integer to $a \in \R$.\\
$\|\mathbf{v}\|_\infty$ & $l_\infty$-norm of vector $\mathbf{v} = \renewcommand{\arraystretch}{1}\begin{bmatrix}v_1 & v_2 & \ldots & \end{bmatrix}$.\\
$\row{i}(A)$ & $i$-th row of matrix $A$: $\renewcommand{\arraystretch}{1}\begin{bmatrix}A_{i,1} & A_{i,2} & \ldots &\end{bmatrix}$.\\
$A_{(i)(j)}$ & matrix obtained from matrix $A$ by removing its $i$-th row and $j$-th column.\\
$\dlinf(A, B)$ & $l_\infty$-induced matrix distance: $\max_{i,j} \left|A_{i,j} - B_{i,j}\right|$.\\
$|S|$ & number of elements in set $S$.\\
$S^{\times n}$ & Cartesian product $\underbrace{S \times \ldots \times S}_{n \text{ times}}$, for $n \in \N$.\\
$S \to T$ & set of the mappings of $S$ into $T$.\\
$\dH^X(S, T)$ & the Hausdorff distance between the subsets $S, T$ of metric space $(X, d_X)$:\\& $\displaystyle \dH^X(S, T) \defeq \max \left\{\sup_{s \in S} \inf_{t \in T} d_X(s, t), \; \sup_{t \in T} \inf_{s \in S} d_X(s, t) \right\}$.\\
$\DX$ & distance matrix of metric space $(X, d_X)$, where $X = \{x_1, \ldots, x_{|X|}\}$:\\&$\DX_{i,j} \defeq d_X(x_i, x_j) \quad \forall i,j = 1, \ldots, |X|$.\\
$\MM$ & set of compact metric spaces.\\
$\diam X$ & diameter of metric space $(X, d_X) \in \MM$: $\displaystyle \sup_{x, x' \in X} d_X(x, x')$.
\end{tabular}
\end{table}
\def\arraystretch{1}
\newpage

\section{Introduction}
\subsection{Isometry-invariant distances between metric spaces}
The Gromov--Hausdorff (GH) distance, proposed by Gromov in \cite{gromov1981groups} (see also \cite{tuzhilin2016invented}), measures how far two compact metric spaces are from being isometric to each other. Since its conception four decades ago, the GH distance was mainly studied from a theoretical standpoint, as its computation poses an intractable combinatorial problem \cite{chazal2009gromov, memoli2007use, schmiedl2017computational}. Computing the GH distance can be thought of as an extension of the NP-hard Quadratic Bottleneck Assignment Problem (QBAP) that allows non-injective assignments in its search space. In particular, it was shown that approximating the distance up to a factor of $\sqrt{N}$ in the general case (or up to a factor of 3 for ultrametric spaces or metric trees) cannot be done in polynomial time unless $\text{P}=\text{NP}$ \cite{schmiedl2017computational, agarwal2018computing}. However, a polynomial-time algorithm for $\sqrt{N}$-approximation was given for unweighted metric trees in \cite{agarwal2018computing}, and a superlinear $\frac{5}{4}$-approximation algorithm for subsets of $\mathbb{R}^1$ was derived in \cite{majhi2019approximating}. In \cite{memoli2004comparing}, the GH distance was first considered for shape comparison, and several of its computationally motivated relaxations were presented since then.

Different variations of the Gromov--Wasserstein distance, a relaxation of the GH distance for metric measure spaces motivated by the optimal transport problem \cite{villani2003topics}, were proposed in \cite{sturm2006geometry} and in \cite{memoli2007use}, and further studied in e.g. \cite{memoli2009spectral, memoli2011gromov, peyre2016gromov}. The Gromov--Wasserstein distances are based on a generalization of bijective assignments, which are less expressive than the assignments allowed in the GH distance. Similarly to the GH distance, computing the Gromov--Wasserstein distance also requires solving a non-convex optimization problem. Recently, semidefinite relaxations of both the GH and Gromov--Wasserstein distances were studied in \cite{villar2016polynomial}. While allowing polynomial-time approximation, these relaxations admit distance 0 between non-isometric objects, losing the desired property of being a metric. Another result of potential interest from \cite{villar2016polynomial} is a feasible algorithm for an upper bound of the GH (and therefore the mGH) distance. In \cite{lipman2011conformal}, the authors define the conformal Wasserstein distance, inspired by the Gromov--Wasserstein distance. It is a metric on the isometry classes of Riemannian 2-manifolds that can be accurately approximated in polynomial time under some reasonable conditions.

In \cite{memoli2012some}, M\'emoli introduces the modified Gromov--Hausdorff (mGH) distance, another relaxation of the GH distance that preserves the property of being a metric on the isometry classes of compact metric spaces. Same as the GH distance, the mGH distance is based on assignments between the metric spaces that are allowed to be non-injective. It turns out that the two distances are not equivalent in the general case \cite{oles2022lipschitz}, although they are topologically equivalent within precompact (in the GH-distance sense) families of compact metric spaces \cite{memoli2012some}.

Directly computing the mGH distance still poses an intractable combinatorial  optimization of $O(N^N)$ time complexity (as compared to $O(N^{2N})$ for the standard GH distance). Our focus on the mGH distance in this paper is partially motivated by the so-called ``structural theorem'' (Theorem 5.1 in \cite{memoli2012some}), which allows for the decomposition of the computation into solving a sequence of polynomial-time problems.

\subsection{Shape-based graph matching}
Because graphs are ubiquitous in applications, the task of graph matching, i.e. measuring how much a pair of graphs are different from each other, is extensively studied. Common approaches to exact graph matching are those based on graph shape, such as subgraph isomorphism and maximum common subgraph \cite{conte2004thirty, foggia2014graph}. %In fact, the maximum common subgraph problem is equivalent to a particular case of graph edit distance \cite{bunke1997relation}, another ubiquitous concept in graph matching.
The shape-based approaches appear in many fields including neuroscience \cite{van2010comparing}, telecommunications \cite{shoubridge2002detection}, and chemoinformatics \cite{raymond2002maximum, vandewiele2012genesys}. While efficient heuristics for these approaches exist for special cases (e.g. planar graphs), applying them in the general case requires solving an NP-complete problem \cite{bunke1997relation, conte2004thirty}.

Recently, the Gromov--Hausdorff framework for graph matching was explored both theoretically \cite{aflalo2015convex} and in applications, e.g. in the fields of neuroscience \cite{lee2011computing, lee2006persistent, hendrikson2016using}, social sciences, and finance \cite{hendrikson2016using}. Undirected graphs admit metric space representation using the geodesic (shortest path length) distances on their vertex sets. However, the high computational cost of computing the isometry-invariant distances impedes a more widespread application of this approach.

\subsection{Our contribution}
Our main contribution is a theoretical framework for producing polynomial-time lower bounds of the GH and mGH distances. Furthermore, we present an algorithm for estimating the mGH distance, built upon this framework. We implement the algorithm for metric representations of unweighted graphs, leveraging their properties to reduce the polynomial order in the algorithm's time complexity. We demonstrate the performance of the algorithm on several datasets of real-world and synthesized graphs of up to several thousand vertices.
% While polynomial-time approximation of the GH distance was implemented for trees in \cite{agarwal2018computing}, this is the first time a practical algorithm for a Gromov--Hausdorff distance is given for a broad class of graphs occurring in applications.

The rest of the paper is structured as follows. Section \ref{background} briefly reviews \cite{memoli2013gromov} to formally define the GH and mGH distances, show their relation to each other, and state some of their properties. In Sections \ref{lower bound} and \ref{upper bound} we discuss the ideas for establishing lower and upper bounds, respectively, of the mGH distance between finite metric spaces. In Section \ref{algorithm}, we describe the algorithm for estimating the mGH distance, show that it has polynomial time complexity, then discuss and present its implementation for the case of unweighted graphs. Computational examples from real-world and synthetic datasets are given in Section \ref{computational examples}, and Section \ref{conclusion} summarizes our work. The Appendix contains pseudocode for the procedures and algorithms, omitted from the main paper for brevity.

\section{Background}
\label{background}
When talking about metric space given by set $X$ and distance function $d_X: X \times X \to \R$, we use notation $(X, d_X)$ and its shorter version $X$ interchangeably. We expect the distinction between a set $X$ and a metric space $X$ to be clear from the context.
\subsection{Definition of the Gromov--Hausdorff distance}
Given $(X, d_X), (Y, d_Y) \in \MM$, where $\MM$ denotes the set of all compact metric spaces, the GH distance measures how far the two metric spaces are from being isometric. It considers any ``sufficiently rich'' third metric space $(Z, d_Z)$ that contains isometric copies of $X$ and $Y$, measuring the Hausdorff distance (in $Z$) between these copies, and minimizes over the choice of the isometric copies and $Z$. Formally, the GH distance is defined as $$\dGH(X, Y) \defeq \inf_{Z, \phi_X, \phi_Y}\dH^Z\big(\phi_X(X), \phi_Y(Y)\big),$$ where $\phi_X : X \to Z$ and $\phi_Y : Y \to Z$ are isometric embeddings of $X$ and $Y$ into $Z$, and $\dH^Z$ is the Hausdorff distance in $Z$: $$\dH^Z(S, T) \defeq \max \Big\{\sup_{s \in S} \inf_{t \in T} d_Z(s, t),\sup_{t \in T} \inf_{s \in S} d_Z(s, t) \Big\}\quad \forall S, T \subseteq Z$$ (see Figure \ref{fig:dGH}). Gromov has shown in \cite{gromov2007metric} that $\dGH$ is a metric on the isometry classes of $\MM$, constituting what is called a Gromov--Hausdorff space.

\begin{figure}[b]
    \centering
\caption{Illustration of the idea underlying the Gromov--Hausdorff distance.}
\label{fig:dGH}
\vspace{.2cm}
\begin{tikzpicture}[scale=1.3]
\begin{scope}[xshift=-65]
\draw[fill=Xcolor] plot[smooth cycle] coordinates {(-2.5,0.5) (-1.5, .5) (-1.5,-0.5) (-3.25, -.5) (-2,0)} node[black,xshift=-25,yshift=5] {$X$};
\end{scope}
\begin{scope}[xshift=-10,yshift=-5,rotate=-30]
\draw[fill=Ycolor] (0,0) ellipse (1.5 and .25) node[black,xshift=15,yshift=15] {$Y$};
\end{scope}
\draw[black,dashed] plot coordinates {(-2.5,-.7)  (-6,-1.25) (-4,-4) (-1, -4) (1,-3) (.6, -2) (-2.5,-.7)} node[black,below] {$Z$};
\begin{scope}[xshift=-40,yshift=-125, rotate=-50]
\draw[fill=Xcolor,opacity=.3,text opacity=1] plot[smooth cycle] coordinates {(-2.5,0.5) (-1.5, .5) (-1.5,-0.5) (-3.25, -.5) (-2,0)} node[black,xshift=-43,yshift=-5] {$\phi_X(X)$};
\end{scope}
\begin{scope}[xshift=-50,yshift=-80,rotate=20]
\draw[fill=Ycolor,opacity=.3,text opacity=1] (0,0) ellipse (1.5 and .25) node[black,xshift=23,yshift=-12] {$\phi_Y(Y)$};
\end{scope}
\draw[->] (-4,-.7)--(-3.3,-2.3) node[midway,left]{$\phi_X$};
\draw[->] (-.6,-.5)--(-.9,-2.1) node[midway,right]{$\phi_Y$};

\end{tikzpicture}

\end{figure}

Although the above definition gives the conceptual understanding of the GH distance, it is not very helpful from the computational standpoint. The next subsection introduces a more practical characterization of the GH distance.
\subsection{Characterization of the GH distance}
For two sets $X$ and $Y$, we say that relation $R \subseteq X \times Y$ is a \textit{correspondence} if for every $x \in X$ there exists some $y \in Y$ s.t. $(x, y) \in R$ and for every $y \in Y$ there exists some $x \in X$ s.t. $(x, y) \in R$. We denote the set of all correspondences between $X$ and $Y$ by $\RR(X, Y)$. 

If $R$ is a relation between metric spaces $(X, d_X)$ and $(Y, d_Y)$, its \textit{distortion} is defined as the number $$\dis R \defeq \sup_{(x, y), (x', y') \in R} \big|d_X(x, x') - d_Y(y, y')\big|.$$ Notice that any mapping $\varphi: X \to Y$ induces the relation $R_\varphi \defeq \big\{\big(x, \varphi(x)\big): x \in X \big\}$, and we denote $$\dis \varphi \defeq \dis R_\varphi = \sup_{x, x' \in X} \big|d_X(x, x') - d_Y\big(\varphi(x), \varphi(x')\big)\big|.$$ Similarly, any $\psi: Y \to X$ induces the relation $R_\psi \defeq \big\{\big(\psi(y), y\big): y \in Y \big\}$. If both $\varphi: X \to Y$ and $\psi: Y \to X$ are given, we can define the relation $R_{\varphi, \psi} \defeq R_\varphi \cup R_\psi$, and realize that it is actually a correspondence, $R_{\varphi, \psi} \in \RR(X, Y)$.

A useful result in \cite{kalton1999distances} identifies computing GH distance with solving an optimization problem, either over the correspondences between $X$ and $Y$ or over the functions $\varphi: X \to Y$ and $\psi: Y \to X$: $$\dGH(X, Y) = \frac{1}{2}\inf_{R \in \RR(X, Y)} \dis R = \frac{1}{2} \inf_{\varphi, \psi}\dis R_{\varphi, \psi}.$$

% \begin{remark}
% 	The first equality implies that computing $\dGH(X, Y)$ for finite $X$ and $Y$ is equivalent to solving an instance of quadratic bottleneck assignment problem (QBAP), which is known to be NP-hard. Moreover, for any $\epsilon > 0$, computing $\epsilon$-approximation of the optimal solution is also NP-hard. \cite{burkard1998quadratic, pardalos1994quadratic}.
% \end{remark}

By definition, distortion of any relation $R \subseteq X \times Y$ is bounded by $\dis R \leq d_{\max}$, where $d_{\max} \defeq \max\{\diam X, \diam Y\}$. Combined with the characterization of the GH distance, it yields $\dGH(X, Y) \leq \frac{1}{2} d_{\max}$.

Let $*$ denote the (compact) metric space that is comprised of exactly one point. For any correspondence $R \in \RR(X, *)$, $\dis R = \sup_{x, x' \in X} \big|d_X(x, x') - 0\big| = \diam X$. The above characterization implies that $\dGH(X, *) = \frac{1}{2}\diam X$, and, by an analogous argument, $\dGH(Y, *) = \frac{1}{2}\diam Y$. From the triangle inequality for the GH distance, $\dGH(X, Y) \geq \big|\dGH(X, *) - \dGH(Y, *)\big| = \frac{1}{2}|\diam X - \diam Y|$.

\subsection{Modifying the GH distance}
Recall that for some $\varphi: X \to Y$ and $\psi: Y \to X$, correspondence $R_{\varphi, \psi}$ is defined as $R_{\varphi, \psi} = R_\varphi \cup R_\psi$. For any two elements in $R_{\varphi, \psi}$, either both belong to $R_\varphi$, or both belong to $R_\psi$, or one of them belongs to $R_\varphi$ while the other belongs to $R_\psi$. It follows that $$\dis R_{\varphi, \psi} = \max \big\{\dis R_\varphi, \dis R_\psi, C_{\varphi, \psi}\big\},$$ where $C_{\varphi, \psi} \defeq \displaystyle \sup_{x \in X, y \in Y} \big|d_X\big(x, \psi(y)\big) - d_Y\big(\varphi(x), y\big)\big|$.

Notice that the number $C_{\varphi, \psi}$ acts as a coupling term between the choices of $\varphi$ and $\psi$ in the optimization problem $$\dGH(X, Y) = \frac{1}{2}\inf_{\varphi, \psi}\max \big\{\dis R_\varphi, \dis R_\psi, C_{\varphi, \psi}\big\},$$ making its search space to be of the size $|X|^{|Y|}|Y|^{|X|}$. Discarding the coupling term $C_{\varphi, \psi}$ yields the notion of \textit{the modified Gromov--Hausdorff distance} $$\dmGH(X, Y) \defeq \frac{1}{2}\inf_{\varphi, \psi} \max \big\{\dis R_\varphi, \dis R_\psi\big\} \leq \dGH(X, Y).$$ Computing $\dmGH(X, Y)$ requires solving two decoupled optimization problems whose search spaces are of the size $|X|^{|Y|}$ and $|Y|^{|X|}$, respectively. An equivalent definition emphasizing this fact is given by $$\dmGH(X, Y) \defeq \frac{1}{2}\max \big\{\inf_{\varphi} \dis R_\varphi, \inf_{\psi} \dis R_\psi\big\}.$$

Similarly to $\dGH$, $\dmGH$ is a metric on the isometry classes of $\MM$. Moreover, $\dmGH$ is topologically equivalent to $\dGH$ within GH-precompact families of metric spaces \cite{memoli2012some}.

\subsection{Curvature sets and the structural theorem}
Let $X \in \MM$, and consider $(x_1, \ldots, x_n) \in X^{\times n}$, an $n$-tuple of points in $X$ for some $n \in \N$. The $n \times n$ matrix containing their pairwise distances is called the \textit{distance sample} induced by $(x_1, \ldots, x_n)$, and denoted by $D^{(x_1, \ldots, x_n)} \defeq \big(d_X(x_i, x_j)\big)_{i,j=1}^n$. A distance sample generalizes the notion of distance matrix of $\{x_1, \ldots, x_n\}$ when $x_1, \ldots, x_n$ are not necessarily distinct. Unlike a distance matrix, a distance sample may contain zeros off the main diagonal.

The \textit{$n$-th curvature set} of $X$ is then defined as a set of all $n \times n$ distance samples of $X$, denoted $$\K_n(X) \defeq \left\{D^{(x_1, \ldots, x_n)}: (x_1, \ldots, x_n) \in X^{\times n}\right\}.$$ For example, $\K_2(X)$ contains the same information as the entries of $D^X$, the distance matrix of $X$; when $X$ is a smooth planar curve, then its curvature at every point can be calculated from the information contained in $\K_3(X)$ \cite{calabi1998differential}, thus providing motivation for the name \textit{curvature sets}.

Curvature sets contain information about the shape of a compact metric space in permutation-invariant way. In particular, any $X \in \MM$ and $Y \in \MM$ are isometric if and only if $\K_n(X) = \K_n(Y)$ for every $n \in \N$ (3.27 in \cite{gromov2007metric}). To discriminate the shapes of $X$ and $Y$, it is therefore reasonable to measure the difference between $\K_n(X)$ and $\K_n(Y)$ for various $n \in \N$. Since both $n$-th curvature sets are subsets of the same space $\R^{n \times n}$, Hausdorff distance is a natural metric between them. We equip the set of $n \times n$ matrices with distance $\dlinf(A, B) \defeq \max_{i,j} \left|A_{i,j}-B_{i,j}\right|$, and define $$d_{\K_n}(X, Y) \defeq \frac{1}{2}\dH^{\R^{n \times n}}\big(\K_n(X), \K_n(Y)\big),$$ where $\dH^{\R^{n \times n}}$ is the $\dlinf$-induced Hausdorff distance on $\R^{n \times n}$.

\begin{remark}
	The choice of distance $\dlinf: \R^{n \times n} \times \R^{n \times n} \to \R$ complies with the notion of distortion of a mapping. If $\varphi$ is a mapping from $X$ to $Y$ for $X = \{x_1, \ldots, x_{|X|}\}$, $Y$ --- metric spaces, then $$\dis \varphi = \dlinf\left(D^X, D^{\left(\varphi(x_1), \ldots, \varphi(x_{|X|})\right)}\right).$$ The fact that $\varphi$ can be non-injective provides intuition for the possibility of identical points in a tuple from the definition of a distance sample.
\end{remark}

An important result, extensively relied upon in this paper, is the so-called ``structural theorem'' for the mGH \cite{memoli2012some, memoli2013gromov}: $$\dmGH(X, Y) = \sup_{n \in \N}d_{\K_n}(X, Y).$$

Notice that the bounds of the GH distance from the inequalities $\frac{1}{2} |\diam X - \diam Y| \leq \dGH(X, Y) \leq \frac{1}{2}d_{\max}$ also hold for the mGH distance: 
\begin{align*}
\dmGH(X, Y) &\geq d_{\K_2}(X, Y)
\\ &= \frac{1}{2}\dH^\R\Big(\big\{d_X(x, x'): x, x' \in X\big\},\big\{d_Y(y, y'): y, y' \in Y\big\}\Big)
\\ &\geq \frac{1}{2}\max \bigg\{\inf_{x, x' \in X}\big|\diam Y - d_X(x, x')\big|, \inf_{y, y' \in Y}\big|\diam X - d_Y(y, y')\big|\bigg\}
\\ &= \frac{1}{2}\left|\diam X - \diam Y\right|,
\end{align*}
while $\dmGH(X, Y) \leq \dGH(X, Y) \leq d_{\max}$ trivially follows from the definition of the mGH distance.

\section{Lower bound for $\dmGH(X, Y)$}
\label{lower bound}
This section provides theoretical results and algorithms for an informative lower bound for the mGH distance between a pair of metric spaces $X$ and $Y$. This and the following sections assume that the metric spaces are non-empty and finite, i.e. $1 \leq |X|, |Y| < \infty$. When talking about algorithmic time complexities, we denote the input size with $N \defeq \max\{|X|, |Y|\}$. %To make this section more readable, proofs of the stated results are given in the end of the section.%, and pseudocode for the auxiliary procedures is provided in Appendix A.

\begin{remark}
We notice that efficient estimation of either the GH or the mGH distance between finite metric spaces allows for its estimation between infinite compact metric spaces with desired accuracy. This is because for any $X \in \MM$ and $\epsilon > 0$, there exist finite $X_\epsilon \subset X$ that is an $\epsilon$-net of $X$, which gives $$\dmGH(X, X_\epsilon) \leq \dGH(X, X_\epsilon) \leq \dH(X, X_\epsilon) \leq \epsilon.$$ Similarly, for any $Y \in \MM$ there exists its finite $\epsilon$-net $Y_\epsilon$, and therefore $$|\dGH(X, Y) - \dGH(X_\epsilon, Y_\epsilon)| \leq \dGH(X, X_\epsilon) + \dGH(Y, Y_\epsilon) \leq 2\epsilon.$$

%This is because for any $X, Y \in \MM$ and $\epsilon > 0$, there exist finite $X_\epsilon \subset X, Y_\epsilon \subset Y$ that are $\epsilon$-nets of $X$ and $Y$, respectively. Then $$|\dGH(X, Y) - \dGH(X_\epsilon, Y_\epsilon)| < 2\epsilon$$ follows from $\dmGH(X, X_\epsilon) \leq \dGH(X, X_\epsilon) \leq \dH(X, X_\epsilon) \leq \epsilon$ and , and the triangle inequality.

\end{remark}

Feasible algorithms for lower bounds are important in e.g. classification tasks, where knowledge that a distance exceeds some threshold can make computing the actual distance unnecessary. In particular, if the mGH distance between metric representations of two graphs is $> 0$, it immediately follows that the graphs are not isomorphic.

\subsection{$d$-bounded matrices}
Let $A$ be a square matrix. We say that $A$ is \textit{$d$-bounded} for some $d \in \R$ if every off-diagonal entry of $A$ is $\geq d$. Similarly, $A$ is \textit{positive-bounded} if its off-diagonal entries are positive. Naturally, any $d$-bounded matrix for $d > 0$ is also positive-bounded.

Notice that a distance sample $D^{(x_1, \ldots, x_n)}$ of $X$ is $d$-bounded if and only if $d_X(x_i, x_j) \geq d \quad \forall i \neq j$, and positive-bounded if and only if $x_1, \ldots, x_n$ are distinct. By non-negativity of a metric, any distance sample is 0-bounded.

\begin{claim}
	Let $A$ and $B$ be square matrices of the same size. If $A$ is $d$-bounded for some $d > 0$, and $B$ is 0-bounded but not positive-bounded, then $\dlinf(A, B) \geq d$.
	\label{claim 1}
\end{claim}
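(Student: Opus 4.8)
The plan is to locate a single matrix position where the two matrices are guaranteed to differ by at least $d$, and then invoke the definition of $\dlinf$ as a maximum of entrywise differences. The key observation is that the hypotheses on $B$ pin down such a position: $B$ being $0$-bounded means every off-diagonal entry satisfies $B_{i,j} \geq 0$, while $B$ failing to be positive-bounded means some off-diagonal entry satisfies $B_{i,j} \leq 0$. Together these force $B_{i,j} = 0$ for some pair of indices $i \neq j$.

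First I would fix such indices $i \neq j$ with $B_{i,j} = 0$. Next, since $A$ is $d$-bounded and $i \neq j$, the entry $A_{i,j}$ is an off-diagonal entry of $A$, hence $A_{i,j} \geq d > 0$. Consequently $\lvert A_{i,j} - B_{i,j}\rvert = \lvert A_{i,j}\rvert = A_{i,j} \geq d$. Finally, because $\dlinf(A, B) = \max_{k,l}\lvert A_{k,l} - B_{k,l}\rvert$ is taken over all positions (in particular over $(i,j)$), we conclude $\dlinf(A, B) \geq \lvert A_{i,j} - B_{i,j}\rvert \geq d$.

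There is essentially no analytic difficulty here; the only thing to be careful about is the logical step that "$0$-bounded but not positive-bounded" yields an \emph{off-diagonal} zero entry rather than merely a nonpositive one — this is where the sign hypothesis $d > 0$ is implicitly being used downstream, since it guarantees $A_{i,j}$ is strictly positive and so the absolute value does not collapse. Everything else is unwinding the definitions of $d$-boundedness, positive-boundedness, and $\dlinf$.
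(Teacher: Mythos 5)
Your proof is correct and follows the same route as the paper's: extract an off-diagonal index pair where $B$ vanishes (forced by $0$-boundedness together with the failure of positive-boundedness), note that the corresponding entry of $A$ is at least $d$, and bound $\dlinf(A,B)$ below by that single entrywise difference. No gaps; this matches the paper's argument essentially line for line.
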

% \begin{proof}
% 	Since $B$ is not positive-bounded, $B_{i',j'} \leq 0$ for some $i' \neq j'$. From the 0-boundedness of $B$, $B_{i',j'} = 0$, and therefore
% 	\begin{align*}
% 	\dlinf(A, B) \geq \left|A_{i'j'} - B_{i'j'}\right|
% 	\geq d. %\quad (\text{from $A$ $d$-bounded})
% 	\end{align*}
% \end{proof}

Recall that a matrix $B$ is a permutation similarity of a (same-sized) matrix $A$ if $B = PAP^{-1}$ for some permutation matrix $P$. Equivalently, $B$ is obtained from $A$ by permuting both its rows and its columns according to some permutation $\pi$: $B_{i,j} = A_{\pi(i),\pi(j)}$. Given $n \in \N$, we will denote the set of permutation similarities of $n \times n$ principal submatrices of $A$ by $\ps_n(A)$.

\begin{claim}
	A distance sample $K \in \K_n(X)$ is positive-bounded if and only if it is a permutation similarity of a principal submatrix of $\DX$, i.e. if and only if $K \in \ps_n(\DX)$. In particular, there are no positive-bounded distance samples in $\K_n(X)$ if $n > |X|$.
	\label{claim 2}
\end{claim}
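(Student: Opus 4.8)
The plan is to prove the two directions of the ``if and only if'' separately, and then derive the ``in particular'' clause as a corollary.

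First I would prove the easy direction: if $K \in \ps^n(\DX)$, then $K$ is positive-bounded. Suppose $K$ is a permutation similarity of a principal submatrix of $\DX$. A principal submatrix of $\DX$ obtained by keeping the rows and columns indexed by some $\{i_1, \ldots, i_n\} \subseteq \I{|X|}$ with the $i_k$ distinct is exactly the distance matrix of the (genuine, all-distinct) subset $\{x_{i_1}, \ldots, x_{i_n}\}$; all its off-diagonal entries are $d_X(x_{i_k}, x_{i_\ell}) > 0$ for $k \neq \ell$ by the metric axioms, so it is positive-bounded. Permuting rows and columns by a common permutation $\pi$ sends off-diagonal entries to off-diagonal entries, so the permutation similarity is positive-bounded as well. (One should note that $\ps^n(\DX)$ as defined only makes sense, i.e.\ is nonempty, when $n \leq |X|$, since a principal submatrix of an $|X| \times |X|$ matrix has size at most $|X|$; this observation will feed the last clause.)

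Next I would prove the converse: if $K \in \K_n(X)$ is positive-bounded, then $K \in \ps^n(\DX)$. By definition of the curvature set, $K = D^{(x_{i_1}, \ldots, x_{i_n})}$ for some tuple $(x_{i_1}, \ldots, x_{i_n}) \in X^{\times n}$, so $K_{k,\ell} = d_X(x_{i_k}, x_{i_\ell})$. Positive-boundedness means $K_{k,\ell} > 0$ for all $k \neq \ell$, i.e.\ $d_X(x_{i_k}, x_{i_\ell}) > 0$, i.e.\ $x_{i_k} \neq x_{i_\ell}$ for all $k \neq \ell$; hence the points $x_{i_1}, \ldots, x_{i_n}$ are pairwise distinct, which (as already noted in the text following the definition of $d$-bounded) is precisely the characterization of positive-bounded curvatures. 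So $\{i_1, \ldots, i_n\}$ is a set of $n$ distinct indices, $n \leq |X|$, and — choosing any permutation $\pi$ of $\I{n}$ that sorts $i_{\pi(1)} < \cdots < i_{\pi(n)}$ — the matrix $K$ is the permutation similarity via $\pi$ of the principal submatrix of $\DX$ on rows/columns $\{i_1, \ldots, i_n\}$. Thus $K \in \ps^n(\DX)$.

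Finally, the ``in particular'' clause follows immediately: if $n > |X|$, then $\ps^n(\DX) = \emptyset$ because $\DX$ has no principal submatrix of size exceeding $|X|$, so by the equivalence just proved there can be no positive-bounded curvature in $\K_n(X)$. I do not expect a genuine obstacle here — the statement is essentially an unwinding of definitions — but the one point that warrants care is making the bookkeeping of indices and the common row/column permutation fully precise, and being explicit that ``principal submatrix'' forces the selected index set to have distinct elements (so that repeated points in the tuple really do obstruct membership in $\ps^n(\DX)$, matching the failure of positive-boundedness).
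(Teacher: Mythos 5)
Your proposal is correct and follows essentially the same route as the paper: both rest on the observation that positive-boundedness of a curvature is equivalent to distinctness of the inducing points, which puts such curvatures in bijection (up to a common row/column permutation) with principal submatrices of $\DX$, and the size bound on principal submatrices then gives the final clause. You simply spell out the two directions and the index bookkeeping more explicitly than the paper's two-sentence argument.
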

% \begin{proof}
% 	Recall that a distance sample is positive-bounded if and only if its underlying points are distinct. Since the points in $X$ are in one-to-one correspondence with the rows in $\DX$, the (permutations of the) tuples of distinct points in $X$ are in one-to-one correspondence with the (permutation similarities of the) principal submatrices of $\DX$. Since the size of a principal submatrix of $\DX$ is at most $|X| \times |X|$, there are no positive-bounded distance samples in $\K_n(X)$ if $n > |X|$. 
% \end{proof}

%The following theorem is the first from our two results that allow to bound $\dmGH(X, Y)$ from below:

\begin{theorem}
	Let $K \in \K_n(X)$ be $d$-bounded for some $d > 0$. If $n > |Y|$, then $\dmGH(X, Y) \geq \frac{d}{2}$.
	\label{theorem 1}
\end{theorem}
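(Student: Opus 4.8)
The plan is to derive the bound directly from the structural theorem together with Claims~\ref{claim 1} and~\ref{claim 2}; no new combinatorics is required. The structural theorem gives $\dmGH(X, Y) = \sup_{m \in \N} d_{\K_m}(X, Y) \geq d_{\K_n}(X, Y) = \tfrac12 \dH^{\R^{n \times n}}\big(\K_n(X), \K_n(Y)\big)$, so it suffices to show that the Hausdorff distance $\dH^{\R^{n \times n}}\big(\K_n(X), \K_n(Y)\big)$ is at least $d$.

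To bound this Hausdorff distance from below, I would keep only the term $\sup_{K' \in \K_n(X)} \inf_{L \in \K_n(Y)} \dlinf(K', L)$ and, within it, only the contribution of the given curvature $K \in \K_n(X)$, so that it remains to prove $\inf_{L \in \K_n(Y)} \dlinf(K, L) \geq d$. This is where the hypothesis $n > |Y|$ enters: by Claim~\ref{claim 2}, $\K_n(Y)$ contains no positive-bounded curvatures, so every $L \in \K_n(Y)$ is $0$-bounded (as all curvatures are) but not positive-bounded. Since $K$ is $d$-bounded with $d > 0$, Claim~\ref{claim 1} applies to the pair $(K, L)$ and yields $\dlinf(K, L) \geq d$. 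As $L$ was arbitrary, the infimum over $\K_n(Y)$ is $\geq d$, and chaining the inequalities back through the structural theorem gives $\dmGH(X, Y) \geq \tfrac{d}{2}$.

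There is essentially no hard step here; the only thing to get right is the direction of the Hausdorff-distance estimate — it is enough to exhibit one curvature of $X$ (namely $K$) that is uniformly far from all of $\K_n(Y)$, rather than controlling both one-sided suprema — and to notice that the non-existence of positive-bounded curvatures in $\K_n(Y)$ supplied by Claim~\ref{claim 2} is precisely the hypothesis needed to invoke Claim~\ref{claim 1}. If one wished to record a slightly stronger fact, the same argument in fact shows $d_{\K_n}(X, Y) \geq \tfrac{d}{2}$, so the bound is already witnessed at curvature order $n$ and survives the supremum over all larger orders.
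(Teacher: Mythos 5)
Your proof is correct and follows essentially the same route as the paper's: both invoke Claim~\ref{claim 2} to see that every $L \in \K_n(Y)$ is $0$-bounded but not positive-bounded, apply Claim~\ref{claim 1} to get $\dlinf(K, L) \geq d$ uniformly in $L$, and chain this through the one-sided Hausdorff estimate and the structural theorem. No gaps; your closing remark that the bound is already witnessed at $d_{\K_n}(X, Y)$ is exactly what the paper's displayed inequalities show.
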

\begin{proof}
	Notice that $L \in \K_n(Y)$ implies that $L$ is 0-bounded (by non-negativity of a metric) and not positive-bounded (from Claim \ref{claim 2}). Then
	\begin{align*}
	\dmGH(X, Y) &\geq \frac{1}{2}\dH^{\R^{n \times n}}\big(\K_n(X), \K_n(Y)\big) 
	\\ &\geq \frac{1}{2}\min_{L \in \K_n(Y)} \dlinf(K, L)
	\\ &\geq\frac{d}{2}.\hspace{4cm}\text{from Claim \ref{claim 1}}
	\end{align*}

\end{proof}

\begin{remark}
    For the standard GH distance, its lower bound implied by Theorem \ref{theorem 1} is a direct consequence of its relationship with the $n$-packing number established in \cite{cho1997optimal}.
\end{remark}

\subsection{Obtaining $d$-bounded distance samples of large size}
In order to apply Theorem \ref{theorem 1} for some $d > 0$, one needs to verify the existence of a $d$-bounded distance sample from $X$ that exceeds $Y$ in size. Ideally, one wants to know $M(X, d)$, the largest size of a $d$-bounded distance sample of $X$: $$M(X, d) \defeq \max \big\{n \in \N: \text{$\exists$ $d$-bounded $K \in \K_n(X)$} \big\}.$$ Equivalently, $M(X, d)$ is the so-called $d$-packing number of $X$: the largest number of points one can sample from $X$ such that they all are at least $d$ away from each other. Finding $M(X, d)$ is equivalent to finding the size of a maximum independent set of the graph $G = \big(X, \{(x_i, x_j): d_X(x_i, x_j) < d\}\big)$. Unfortunately, this problem is known to be NP-hard \cite{kegl2002intrinsic}, and we therefore require approximation techniques to search for a sufficiently large $d$-bounded distance sample of $X$.

We implement greedy algorithm \textproc{FindLarge$K$} (see Appendix \ref{FindLarge$K$} for the pseudocode) that, given the distance matrix of $X$ and some $d > 0$, finds in $O(N^3)$ time a $d$-bounded distance sample $K \in \K_{\widetilde{M}(X, d)}(X)$, where $\widetilde{M}(X, d)$ is an approximation of $M(X, d)$. Informally, the algorithm iteratively removes rows (and same-index columns) from $\DX$ until all off-diagonal entries of the resulting $K$ are $\geq d$. At each step, the algorithm chooses to remove a row that is \textit{least $d$-bounded}, i.e. one with the highest (non-zero) count of off-diagonal entries $< d$.

Removing a row from $K$ also ``collaterally'' removes an entry $< d$ from some of the remaining rows (due to removal of the corresponding column), potentially turning them into $d$-bounded. The counts of off-diagonal entries $<d$ in these rows define the number of newly appeared $d$-bounded rows in the remaining matrix at each step, thus controlling the duration of \textproc{FindLarge$K$}. In particular, the maximum duration (and therefore the worst approximation $\widetilde{M}(X, d)$) is attained when ``collaterally'' removed entries $< d$ reside in the least $d$-bounded rows at each step, which minimizes the number of $d$-bounded rows obtained from the removal. Let $w$ be the maximum count of off-diagonal entries $<d$ in the rows of $\DX$. Assuming the worst case scenario, each iteration of $\textproc{FindLarge$K$}(\DX, d)$ reduces the number of rows with $w$ off-diagonal entries $<d$ by $w+1$ (one row is removed and the counts for $w$ others become $w-1$), or to zero if fewer than $w+1$ of them remains. For the sake of simplicity, we will assume that $w+1$ divides $|X|$. It follows that after $\frac{|X|}{w+1}$ iterations the maximum count of off-diagonal entries $<d$ in the $\frac{w}{w+1}|X|$ rows of $K$ is at most $w-1$. More generally, if for some $k \leq w$ the current $K$ is comprised by $\big(1-\frac{k}{w+1}\big)|X|$ rows with at most $w-k$ off-diagonal entries $<d$ each, then after another $$\frac{\big(1-\frac{k}{w+1}\big)|X|}{w-k+1} = \frac{|X|}{w+1}$$
iterations the maximum count will be at most $w-k-1$. By induction, $K$ will contain no off-diagonal entries $<d$ after at most $w\frac{|X|}{w+1}$ iterations, which implies that its resulting size $\widetilde{M}(X, d)$ must be at least $\frac{1}{w+1}|X|$. Because $M(X, d) \leq |X|$, the approximation ratio of $\widetilde{M}(X, d)$ is $\frac{1}{w+1}$.

Notice that the obtained $K$ does not need to be unique, because at any step there can be multiple least $d$-bounded rows. Choosing which one of them to remove allows selecting for some desired characteristics in the retained rows of $K$. In particular, subsection \ref{verifying lower bound} provides motivation to select for bigger entries in the resulting $d$-bounded distance sample $K$. To accommodate this, we choose to remove at each step a least $d$-bounded row with the smallest sum of off-diagonal entries $\geq d$, so-called \textit{smallest least $d$-bounded} row. Since uniqueness of a smallest least $d$-bounded row is not guaranteed either, we implemented procedure \textproc{FindLeastBoundedRow} (see Appendix \ref{FindLeastBoundedRow} for the pseudocode) to find the index of the first such row in a matrix.

\subsection{Using permutation similarities of principal submatrices of $\DY$}
\label{principal submatrices}
Theorem \ref{theorem 1} establishes that $\dmGH(X, Y) \geq \frac{d}{2}$ from the existence of some $d$-bounded distance sample of sufficiently large size. However, such distance samples may not exist in certain cases (e.g. when $|X| = |Y|$), thus deeming Theorem \ref{theorem 1} inapplicable. The main result of this subsection, Theorem \ref{theorem 2}, complements Theorem \ref{theorem 1} by allowing to verify $\dmGH(X, Y) \geq \frac{d}{2}$ in those cases.

\begin{lemma}
	Let $K \in \K_n(X)$ be $d$-bounded for some $d > 0$, and let $n \leq |Y|$. If for some $i \in \I{n} \defeq \{1, \ldots, n\}$ \begin{align*}
	    \|\row{i}(K) - \row{i}(L)\|_\infty \geq d \quad \forall L \in \ps_n(\DY),
	\end{align*} then $\dlinf\big(K, \K_n(Y)\big) \geq d$.
	\label{lemma 1}
\end{lemma}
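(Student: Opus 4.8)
The plan is to show that under the stated hypothesis, no curvature $L \in \K_n(Y)$ can come within $d$ of $K$ in the $\dlinf$ metric, by splitting $\K_n(Y)$ into two types of curvatures and handling each separately. Recall that $\dlinf(K, \K_n(Y)) = \min_{L \in \K_n(Y)} \dlinf(K, L)$, so it suffices to prove $\dlinf(K, L) \geq d$ for every $L \in \K_n(Y)$. The natural dichotomy, suggested by Claims \ref{claim 1} and \ref{claim 2}, is: either $L$ is positive-bounded, or it is not.

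First I would dispatch the non-positive-bounded case. If $L \in \K_n(Y)$ is not positive-bounded, then since $L$ is automatically $0$-bounded (non-negativity of the metric $d_Y$), Claim \ref{claim 1} applies directly to the pair $(K, L)$: $K$ is $d$-bounded with $d > 0$ and $L$ is $0$-bounded but not positive-bounded, hence $\dlinf(K, L) \geq d$. This case uses none of the row hypothesis.

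The row hypothesis is exactly what handles the remaining case. If $L \in \K_n(Y)$ is positive-bounded, then by Claim \ref{claim 2} we have $L \in \ps^n(\DY)$ (here $n \leq |Y|$ is needed so that such $L$ can exist at all — this is where that hypothesis enters). Then for the distinguished index $i \in \I{n}$ from the hypothesis,
\begin{align*}
\dlinf(K, L) &= \max_{j, k} |K_{j,k} - L_{j,k}|
\\ &\geq \max_{k} |K_{i,k} - L_{i,k}|
\\ &= \|\row{i}(K) - \row{i}(L)\|_\infty
\\ &\geq d,
\end{align*}
where the last inequality is the assumed bound applied to $L \in \ps^n(\DY)$. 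Combining the two cases, $\dlinf(K, L) \geq d$ for all $L \in \K_n(Y)$, and taking the minimum over $L$ gives $\dlinf(K, \K_n(Y)) \geq d$.

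I do not expect a genuine obstacle here; the lemma is essentially a bookkeeping step that packages Claims \ref{claim 1} and \ref{claim 2} together. The only point requiring a little care is making sure the case split on $\K_n(Y)$ is exhaustive and that the hypothesis $n \leq |Y|$ is invoked precisely where needed (namely, to license the use of Claim \ref{claim 2} in the positive-bounded case — if $n > |Y|$ there would be no positive-bounded curvatures and the lemma would reduce to Theorem \ref{theorem 1}). The mild subtlety worth flagging in the writeup is that the row condition is stated with a fixed $i$ working simultaneously against \emph{all} $L \in \ps^n(\DY)$, which is exactly strong enough for the argument since in the positive-bounded case we get to choose that same $i$ regardless of which $L$ we face.
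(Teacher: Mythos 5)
Your proposal is correct and follows essentially the same route as the paper's own proof: the same dichotomy on whether $L \in \K_n(Y)$ is positive-bounded, with Claim \ref{claim 1} handling the non-positive-bounded case and Claim \ref{claim 2} plus the row hypothesis handling the positive-bounded case. Your remarks on where $n \leq |Y|$ enters and why a single fixed $i$ suffices are accurate glosses on the same argument.
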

\begin{proof}
	Let $L \in \K_n(Y)$. If $L$ is not positive-bounded, $\dlinf(K, L) \geq d$ follows from Claim \ref{claim 1} and the fact that any distance sample is 0-bounded. If $L$ is positive-bounded, then $L \in \ps_n(\DY)$ from Claim \ref{claim 2}, and the premise entails
	\begin{align*}
	\dlinf(K, L) \geq \big\|\row{i}(K) - \row{i}(L)\big\|_\infty \geq d.
	\end{align*}
	Because $\dlinf(K, L) \geq d$ for an arbitrary choice of $L \in \K_n(Y)$, we have $\dlinf\big(K, \K_n(Y)\big) \geq d.$
\end{proof}

\begin{remark}
	A naive approach to proving $\dlinf(K, \K_n(Y)) \geq d$ is to show that $\dlinf(K, L) \geq d$ for each $L \in \K_n(Y)$, which comprises an instance of the NP-hard QBAP. Instead, the premise of Lemma \ref{lemma 1} (for a particular $i$) can be checked by solving at most $|Y|$ optimization problems of $O(|Y|)$ time complexity each, as will be shown in the next subsection.
\end{remark}

\begin{theorem}
	Let $K \in \K_n(X)$ be $d$-bounded for some $d > 0$, and let $n \leq |Y|$.  If for some $i \in \I{n}$ $$\|\row{i}(K) - \row{i}(L)\|_\infty \geq d \quad \forall L \in \ps_n(\DY),$$ then $\dmGH(X, Y) \geq \frac{d}{2}$.
	\label{theorem 2}
\end{theorem}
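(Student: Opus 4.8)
The plan is to reduce Theorem \ref{theorem 2} to Lemma \ref{lemma 1} almost immediately, exactly as Theorem \ref{theorem 1} was reduced to Claim \ref{claim 1}. The premise of Theorem \ref{theorem 2} is word-for-word the premise of Lemma \ref{lemma 1}, so the lemma gives us $\dlinf\big(K, \K_n(Y)\big) \geq d$ for free. From there I would chain the standard string of inequalities: the Hausdorff distance $\dH^{\R^{n \times n}}\big(\K_n(X), \K_n(Y)\big)$ dominates $\inf_{L \in \K_n(Y)} \dlinf(K, L) = \dlinf\big(K, \K_n(Y)\big)$ because $K \in \K_n(X)$, hence it is $\geq d$; then $d_{\K_n}(X, Y) = \tfrac12 \dH^{\R^{n \times n}}\big(\K_n(X), \K_n(Y)\big) \geq \tfrac{d}{2}$; and finally the structural theorem $\dmGH(X, Y) = \sup_{m \in \N} d_{\K_m}(X, Y) \geq d_{\K_n}(X, Y) \geq \tfrac{d}{2}$.

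Concretely, I would write:
\begin{proof}
	By Lemma \ref{lemma 1}, the premise gives $\dlinf\big(K, \K_n(Y)\big) \geq d$. Since $K \in \K_n(X)$,
	\begin{align*}
	\dH^{\R^{n \times n}}\big(\K_n(X), \K_n(Y)\big) &\geq \dlinf\big(K, \K_n(Y)\big)
	\\ &\geq d,
	\end{align*}
	and therefore
	\begin{align*}
	\dmGH(X, Y) &\geq d_{\K_n}(X, Y)
	\\ &= \frac{1}{2}\dH^{\R^{n \times n}}\big(\K_n(X), \K_n(Y)\big)
	\\ &\geq \frac{d}{2}.
	\end{align*}
\end{proof}

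There is essentially no obstacle here: all the work has already been front-loaded into Lemma \ref{lemma 1} (whose own proof handles the two cases of positive-bounded versus not-positive-bounded curvatures in $\K_n(Y)$) and into the structural theorem quoted in Section \ref{background}. The only thing to be careful about is the direction of the Hausdorff-distance inequality --- one needs $K$ to lie in $\K_n(X)$ so that $\inf_{L} \dlinf(K,L)$ is bounded above by $\sup_{K' \in \K_n(X)} \inf_{L} \dlinf(K', L)$, which is in turn $\leq \dH^{\R^{n\times n}}\big(\K_n(X),\K_n(Y)\big)$ --- and $K \in \K_n(X)$ is part of the hypothesis. So the proof is a two-line corollary of the preceding lemma.
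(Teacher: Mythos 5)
Your proof is correct and follows essentially the same route as the paper's: both invoke Lemma \ref{lemma 1} to get $\dlinf\big(K, \K_n(Y)\big) \geq d$, bound the Hausdorff distance from below by this quantity since $K \in \K_n(X)$, and conclude via $d_{\K_n}$ and the structural theorem. No gaps.
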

\begin{proof}
	%	By Lemma \ref{lemma 1}, $\dlinf\big(K, \K_n(Y)\big) \geq d$. Then
	\begin{align*}
	\dmGH(X, Y) &\geq \frac{1}{2}\dH^{\R^{n \times n}}\big(\K_n(X), \K_n(Y)\big) 
	\\ &\geq \frac{1}{2}\dlinf\big(K, \K_n(Y)\big) 
	\\ &\geq \frac{d}{2}. \hspace{4cm}\text{from Lemma \ref{lemma 1}}
	\end{align*}
\end{proof}

% Note that $\mathbf{v} = \begin{bmatrix}v_1 & \ldots & v_m\end{bmatrix}\in \pr{m}{i}(A)$ if and only if $v_1, \ldots, v_m$ are distinct-indexed entries of the same row of $A$ and $v_i$ is the diagonal entry in that row (in particular, $v_i = 0$ when $A$ is a distance matrix). One implication of this is that $\pr{m}{i}(A)$ admits the representation $\pr{m}{i}(A) = \bigcup_{j=1}^{n}\pr{m}{i \gets j}\left(A\right)$, where $\pr{m}{i \gets j}\left(A\right) \defeq \big\{\mathbf{v} \in \pr{m}{i}(A): \text{entries of $\mathbf{v}$ are from $\row{j}(A)$}\big\}$. Such decomposition will prove useful in checking a lower bound of $\dlinf\big(\mathbf{u}, \pr{m}{i}(A)\big)$ for some $\mathbf{u} \in \R^m$, discussed in the next subsection.

\subsection{Verifying $\dmGH(X, Y) \geq \frac{d}{2}$}
\label{verifying lower bound}
Let $d > 0$. To see if we can verify $\dmGH(X, Y) \geq \frac{d}{2}$ from $\DX$ and $\DY$, we start by calling \textproc{FindLarge$K$} to obtain a $d$-bounded distance sample $K \in \K_n(X)$ for some $n \geq \frac{1}{w+1}M(X, d)$. If $n > |Y|$, then $\dmGH(X, Y) \geq \frac{d}{2}$ follows immediately from Theorem \ref{theorem 1}. Otherwise, we want to obtain this lower bound from Theorem \ref{theorem 2}, which requires showing that some $i \in \I{n}$ satisfies $$\|\row{i}(K) - \row{i}(L)\|_\infty \geq d \quad \forall L \in \ps_n(\DY).$$

% !! change claim label numeration since Claim 3, 4, 5 got removed
Let $i$ be fixed. If $L \in \ps_n(\DY)$, then all entries in $\row{i}(L)$ come from one row of $\DY$, with $L_{i,i} = 0$ being the diagonal element in that row. The choice of $i$ thus induces a (disjoint) partition of $\ps_n(\DY)$: $$\ps_n(\DY) = \bigcup_{j=1}^{|Y|} \ps_n^{i \gets j}(\DY),$$ where $\ps_n^{i \gets j}(\DY)$ is the set of all permutation similarities of principal submatrices of $\DY$ whose $i$-th row is comprised of the entries in $\row{j}(\DY)$. Therefore, the condition $$\|\row{i}(K) - \row{i}(L)\|_\infty \geq d \quad \forall L \in \ps_n(\DY)$$ can be verified by showing that $$\|\row{i}(K) - \row{i}(L)\|_\infty \geq d \quad \forall L \in \ps_n^{i\gets j}(\DY), \forall j \in \I{|Y|}.$$

Let $j$, in addition to $i$, be fixed. Notice that any $L \in \ps_n^{i\gets j}(\DY)$ corresponds to an injective mapping $f_L:\I{n} \to \I{|Y|}$ that defines the entries from $\row{j}(\DY)$ that populate $\row{i}(L)$: $L_{i, k} = \DY_{j, f_L(k)}$. In particular, $f_L(i) = j$, because $L_{i,i} = \DY_{j,j}$ for any $L \in \ps_n^{i\gets j}(\DY)$. Therefore, the condition $$\|\row{i}(K) - \row{i}(L)\|_\infty \geq d \quad \forall L \in \ps_n^{i\gets j}(\DY)$$ is equivalent to the non-existence of an injective $f_L:\I{n} \to \I{|Y|}$ such that $|K_{i, k} - \DY_{j, f_L(k)}| < d \quad \forall k \in \I{n}$ and $f_L(i) = j$. The decision about the existence of a feasible assignment $f_L$ between the entries of $\row{i}(K)$ and $\row{j}(\DY)$ is an instance of linear assignment feasibility problem. If such $f_L$ exists, it can be constructed by iteratively pairing the smallest unassigned $K_{i, k}$ to the smallest available $\DY_{j, h}$ s.t. $|K_{i, k} - \DY_{j, h}| < d$, that is, by setting $f_L(k) = h$ (see Claim \ref{feasible assignment}). %This way, if $\DY_{j, h}$ is too small to satisfy $|K_{i, k} - \DY_{j, h}| < d$ for $K_{i, k}$ (that is, if $\DY_{j, h} \leq K_{i, k} - d$), then it is also too small for any other unassigned entries of $\row{i}(K)$ and thus can be discarded. At the same time, if $K_{i, k}$ is too small to satisfy this inequality (that is, if $K_{i, k} \leq \DY_{j, h} - d$), then it is also too small for any other available entries in $\row{j}(\DY)$, implying that no feasible assignment $f_L$ exists and hence $\|\row{i}(K) - \row{i}(L)\|_\infty \geq d \quad \forall L \in \ps_n^{i\gets j}(\DY)$.
It follows that each entry in $\row{j}(\DY)$ needs to be checked at most once, and hence solving the feasibility problem takes $O(|Y|)$ time if the entries in both $\row{i}(K)$ and $\row{j}(\DY)$ are sorted. Procedure \textproc{SolveFeasibleAssignment} (see Appendix \ref{SolveFeasibleAssignment} for the pseudocode) implements the solution for a pair of vectors, given their entries are arranged in ascending order. We notice that ignoring the actual order of the (off-diagonal) entries in $\row{i}(K)$ and $\row{j}(\DY)$ reflects the fact that curvature sets are closed under permutations of the underlying tuples of points.

\begin{claim}
    Let $\mathbf{v} \in \mathbb{R}^p, \mathbf{u} \in \mathbb{R}^q$ be s.t. $v_1 \leq \ldots \leq v_p, u_1 \leq \ldots \leq u_q$. If $\textproc{SolveFeasibleAssignment}(\mathbf{v}, \mathbf{u}, d)=\mathrm{FALSE}$, then no injective $f: \I{p} \to \I{q}$ s.t. $|v_t - u_{f(t)}| < d \quad \forall t \in \I{p}$ exists.
    \label{feasible assignment}
\end{claim}
\begin{proof}
    Let $t$ be the largest index for which $u_t$ could be feasibly assigned to some available entry of $\mathbf{v}$ by $\textproc{SolveFeasibleAssignment}(\mathbf{v}, \mathbf{u}, d)$. If such $t$ does not exist, then $|v_1 - u_l| \geq d \quad \forall l \in \I{q}$ and trivially no feasible assignment between $\mathbf{v}$ and $\mathbf{u}$ can exist.
    
    Denote the partially constructed assignment by $f: \I{t} \to \I{q}$, and notice that it is monotone by construction. In particular, every $u_l$ for $l > f(t)$ must remain available for assignments. Moreover, because $v_{t+1}$ could not be feasibly assigned to any entry of $\mathbf{u}$ after $u_{f(t)}$, which means that $v_t \leq u_l - d \quad \forall l > f(t-1)$.
    
    Let $r \leq t$ be the smallest index such that $f(r), f(r+1), \ldots, f(t)$ are a contiguous block, i.e. that $f(r) = f(t) - (t - r + 1)$. Because $u_{f(r)-1}$ remains available (or does not exist due to $f(r)=1$), $v_r$ could not be feasibly assigned to any entry of $\mathbf{u}$ before $u_{f(r)}$, which means that $v_r \geq u_l + d  \quad \forall l < f(r)$. 
    
    Because $v_r \leq \ldots \leq v_t \leq v_{t+1}$, none of these $t-r+2$ entries can be feasibly assigned to the entries of $\mathbf{u}$ before $u_{f(r)}$ or after $u_{f(t)}$, meaning that they only have $t-r+1$ feasible matches in $\mathbf{u}$. By the pigeonhole principle, no feasible assignment between the two vectors can exist.
\end{proof}

\begin{remark}
	Intuitively, either sufficiently small or sufficiently large entries in $\row{i}(K)$ for some $i \in \I{n}$ can render a feasible assignment $f_L$ impossible for every $j \in \I{|Y|}$, yielding $\|\row{i}(K) - \row{i}(L)\|_\infty \geq d \quad \forall L \in \ps_n(\DY)$ and therefore $\dmGH(X, Y) \geq \frac{d}{2}$ from Theorem \ref{theorem 2}. This provides the motivation behind considering the magnitude of the $\geq d$ entries when choosing a row to remove at each step of \textproc{FindLarge$K$}. Recall that such row is chosen by the auxiliary procedure \textproc{FindLeastBoundedRow}, that selects for bigger entries in the resulting $K$. The approach allows for a tighter lower bound in the case when the entries in $\DX$ are, generally speaking, bigger than those in $\DY$. The converse case is covered by similarly obtaining and handling a $d$-bounded distance sample of $Y$.
\end{remark}

Calling \textproc{SolveFeasibleAssignment}$(\row{i}(K), \row{j}(\DY), d)$ for every $j \in \I{|Y|}$ is sufficient to check whether a particular $i$ satisfies $\|\row{i}(K) - \row{i}(L)\|_\infty \geq d \quad \forall L \in \ps_n(\DY)$. Procedure \textproc{CheckTheoremB} (see Appendix \ref{CheckTheoremB} for the pseudocode) makes such check for each $i \in \I{n}$ to decide whether $\dmGH(X, Y) \geq \frac{d}{2}$ follows from Theorem \ref{theorem 2} for the $d$-bounded $K$. The procedure sorts the entries in the rows of $K$ and $\DY$ prior to the checks, which takes $O(N\log N)$ time for each of the $O(N)$ rows. This allows solving each of the $O(N^2)$ feasibility problems in $O(N)$ time, making the time complexity of \textproc{CheckTheoremB} $O(N^2\log N + N^3) = O(N^3)$.

Notice that both Theorem \ref{theorem 1} and \ref{theorem 2} regard a largest-size $d$-bounded distance sample of only one metric space, $X$. However, its counterpart for $Y$ is equally likely to provide information for discriminating the two metric spaces. Making use of the symmetry of $\dmGH$, we summarize theoretical findings of this section under $O(N^3)$-time procedure \textproc{VerifyLowerBound} (see Appendix \ref{VerifyLowerBound} for the pseudocode), that attempts to prove $\dmGH(X, Y) \geq \frac{d}{2}$.

\subsection{Obtaining the lower bound}
Procedure \textproc{VerifyLowerBound} is a decision algorithm that gives a ``yes'' or ``no'' answer to the question if a particular value can be proven to bound $\dmGH(X, Y)$ from below. In order to obtain an informative lower bound, one wants to find the largest value for which the answer is ``yes''. Since $\dmGH(X, Y) \leq \frac{1}{2}d_{\max}$, the answer must be ``no'' for any value above $\frac{1}{2}d_{\max}$, and therefore it suffices to limit the scope to $(0, \frac{1}{2}d_{\max}]$. To avoid redundancy when checking the values from this interval, we consider the following result.

\begin{claim}
	Let $\Delta$ denote the set of absolute differences between the distances in $X$ and $Y$, $$\Delta \defeq \{\left|d_X(x, x') - d_Y(y, y')\right|: x, x' \in X, y, y' \in Y\},$$ and let $\{\delta_i\}_{i=1}^{|\Delta|}$ represent the sorting order of $\Delta$, $0 = \delta_1 < \ldots < \delta_{|\Delta|} = d_{\max}$. If $\delta_i < d_1 < d_2 \leq \delta_{i+1}$ for some $d_1, d_2$ and $i \in \I{|\Delta|-1}$, then $\textproc{VerifyLowerBound}(\DX, \DY, d_1) = \textproc{VerifyLowerBound}(\DX, \DY, d_2)$.
	\label{claim 6}
\end{claim}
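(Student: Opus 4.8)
The plan is to prove the stronger statement that $\textproc{VerifyLowerBound}(\DX,\DY,\cdot)$ is constant, as a function of its third argument, on the interval $(\delta_i,\delta_{i+1}]$; the hypothesis $\delta_i<d_1<d_2\le\delta_{i+1}$ places both $d_1$ and $d_2$ in this interval, so the claim follows immediately. The engine of the proof is the observation that an execution of $\textproc{VerifyLowerBound}$ uses its parameter $d$ only inside finitely many comparisons, each of the form ``is $\delta<d$?'' (equivalently ``is $\delta\ge d$?'') for some real number $\delta$, and that every $\delta$ arising this way belongs to $\Delta$.

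First I would record two facts about $\Delta$. Since $Y\neq\emptyset$, fixing any $y\in Y$ gives $d_X(x,x')=\bigl|d_X(x,x')-d_Y(y,y)\bigr|\in\Delta$ for all $x,x'\in X$; symmetrically $d_Y(y,y')\in\Delta$ for all $y,y'\in Y$, and $0\in\Delta$. So $\Delta$ contains $0$, every distance of $X$, every distance of $Y$, and --- by its very definition --- every absolute difference between a distance of $X$ and a distance of $Y$.

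Next I would trace where $d$ enters $\textproc{VerifyLowerBound}$ and its subroutines, checking that each comparison is against an element of $\Delta$. The calls to $\textproc{FindLarge}K$ (on $X$, and on $Y$ by the symmetry of $\dmGH$) compare off-diagonal entries of $\DX$ and $\DY$ against $d$ --- both when counting, for each row, the entries $<d$, and inside $\textproc{FindLeastBoundedRow}$ when summing the entries $\ge d$ for tie-breaking; these entries are distances of $X$ or of $Y$, hence lie in $\Delta$. Because $d>0$, the curvature $K\in\K_n(X)$ returned by $\textproc{FindLarge}K$ is positive-bounded, so Claim~\ref{claim 2} gives $K\in\ps^n(\DX)$ and every entry of $K$ equals some $d_X(\cdot,\cdot)$. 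Then $\textproc{CheckTheoremB}$, through $\textproc{SolveFeasibleAssignment}$, compares $d$ only against numbers of the form $|K_{i,k}-\DY_{j,h}|$ (and against $K_{i,k}-\DY_{j,h}$ and $\DY_{j,h}-K_{i,k}$, which when nonnegative equal that absolute difference and when negative make the pertinent ``$\ge d$'' test fail for every positive $d$); as $K_{i,k}$ is a distance of $X$ and $\DY_{j,h}$ a distance of $Y$, these absolute differences lie in $\Delta$. The remaining use of the output of $\textproc{FindLarge}K$ --- the test $n>|Y|$ from Theorem~\ref{theorem 1} --- does not involve $d$ at all. Thus every number ever compared against $d$ lies in $\Delta$.

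Finally I would invoke the order-theoretic observation that closes the argument: if $\delta\in\Delta$ and $d\in(\delta_i,\delta_{i+1}]$, then (writing $\delta=\delta_m$) either $m\le i$, so $\delta\le\delta_i<d$, or $m\ge i+1$, so $\delta\ge\delta_{i+1}\ge d$; in either case the truth value of ``$\delta<d$'', hence of its negation ``$\delta\ge d$'', is the same for every $d\in(\delta_i,\delta_{i+1}]$, precisely because no element of $\Delta$ lies strictly between $\delta_i$ and $\delta_{i+1}$. Since the only comparisons against $d$ are of these two complementary types, running $\textproc{VerifyLowerBound}$ with $d=d_1$ and with $d=d_2$ produces identical execution traces --- the same branches, the same row deleted at each step of $\textproc{FindLarge}K$ (tie-breaking included), the same feasibility verdicts --- and therefore the same final answer. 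I expect the delicate part to be exactly this middle bookkeeping step: one must verify that no subroutine secretly compares $d$ against a quantity outside $\Delta$, and that none of them uses a \emph{strict} ``$>d$'' test, which could change value at the right endpoint $\delta_{i+1}$; Claim~\ref{claim 2} and the ``set $y=y'$'' characterization of the distances inside $\Delta$ are what make the first of these go through.
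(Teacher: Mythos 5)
Your proposal is correct and follows essentially the same route as the paper's own proof: both arguments observe that $\textproc{VerifyLowerBound}$ uses $d$ only through comparisons of the form ``$\delta < d$'' with $\delta \in \Delta$ (the entries of $\DX$ and $\DY$ and the absolute differences arising in \textproc{SolveFeasibleAssignment}), and that no element of $\Delta$ lies strictly between $\delta_i$ and $\delta_{i+1}$, so all such comparisons, and hence the execution traces, agree for $d_1$ and $d_2$. Your version is somewhat more explicit than the paper's about why the individual distances belong to $\Delta$ and about the tie-breaking comparisons inside \textproc{FindLeastBoundedRow}, but this is elaboration rather than a different argument.
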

\begin{proof}
	\textproc{VerifyLowerBound} considers the value of its argument $d$ only through comparisons of the form ``$\delta < d$'' for some $\delta$, that occur in \textproc{FindLarge$K$} and \textproc{SolveFeasibleAssignment}. Notice that the values of $\delta$ compared with $d$ in \textproc{FindLarge$K$} are the entries of $\DX$ or $\DY$, and therefore belong to $\Delta$ as $$\{d_X(x, x'): x, x' \in X\}, \{d_Y(y, y'): y, y' \in Y\} \subseteq \Delta.$$ The values of $\delta$ compared with $d$ in \textproc{SolveFeasibleAssignment} belong to $\Delta$ by construction.
	
	For any $\delta \in \Delta$, $\delta < d_1$ if and only if $\delta < d_2$. This is because $\delta < d_2$ implies $\delta < d_1$ from $\delta \notin [d_1, d_2)$, while $\delta < d_1$ implies $\delta < d_2$ trivially. It follows that both \textproc{FindLarge$K$} and \textproc{SolveFeasibleAssignment} yield identical outputs on $d_1$ and $d_2$ (and otherwise identical inputs), and hence so does \textproc{VerifyLowerBound}.
\end{proof}

Claim \ref{claim 6} implies that the largest $\delta \in \Delta$ s.t. $\textproc{VerifyLowerBound}(\DX, \DY, \delta) = \mathrm{TRUE}$ is the largest $d \in \R$ s.t. $\textproc{VerifyLowerBound}(\DX, \DY, d) = \mathrm{TRUE}$. We use this fact to implement the procedure \textproc{FindLowerBound} (see Appendix \ref{FindLowerBound} for the pseudocode), that obtains a lower bound of $\dmGH(X, Y)$ by calling $\textproc{VerifyLowerBound}(\DX, \DY, \delta)$ for each $\delta \in \Delta$ from largest to smallest, and stops once the output is TRUE. Since $|\Delta| = O(N^4)$ in the general case, the time complexity of \textproc{FindLowerBound} is $O(N^7)$. 

\begin{remark}
	Using binary search on $\Delta$ instead of traversing its values in descending order reduces the number of calls to \textproc{VerifyLowerBound} from $O(N^4)$ to $O(\log N)$, bringing the time complexity of \textproc{FindLowerBound} to $O(N^3 \log N)$. We however notice that, given some $d_1 < d_2$, $\textproc{VerifyLowerBound}(\DX, \DY, d_2) = \mathrm{TRUE}$ does not guarantee $\textproc{VerifyLowerBound}(\DX, \DY, d_1) = \mathrm{TRUE}$ due to the heuristic nature of \textproc{FindLarge$K$} (for example, it is possible to obtain $\widetilde{M}(X, d_1) \leq |Y| < \widetilde{M}(X, d_2)$ even though, trivially, $M(X, d_1) \geq M(X, d_2) \geq \widetilde{M}(X, d_2)$). It follows that relying on the binary search in \textproc{FindLowerBound} can result in failing to find the largest $\delta \in \Delta$ s.t. $\textproc{VerifyLowerBound}(\DX, \DY, \delta) = \mathrm{TRUE}$, and thus in reducing time complexity at the cost of potentially lower accuracy.
\end{remark}
	
\begin{remark}
    An increase in accuracy at the expense of time complexity can be achieved by modifying \\$\textproc{FindLowerBound}(\DX, \DY)$ and its subroutines so that Theorem \ref{theorem 2} for the $d$-bounded distance samples tries to prove not only $\dmGH(X, Y) \geq \frac{d}{2}$ but also $\dmGH(X, Y) \geq \frac{d^*}{2}$ for some $d^* < d$, $d^* \in \Delta$. For every pair $i \in \I{n}, j \in \I{|Y|}$, the modified $\textproc{CheckTheoremB}(K, \DY, d)$ finds the largest $d^* \leq d$ s.t. $\textproc{SolveFeasibleAssignment}(\row{i}(K), \row{j}(\DY), d^*)=\mathrm{TRUE}$  (instead of checking the existence of feasible assignment for $d$ only) and returns the largest $d^*$ s.t. some $i$ satisfies $\|\row{i}(K) - \row{i}(L)\|_\infty \geq d^* \quad \forall L \in \ps_n(\DY)$. Accordingly, \textproc{VerifyLowerBound} propagates this value of $d^*$ to \textproc{FindLowerBound}, which stores their maximum from the already processed part of $\Delta$ (and proceeds only if this maximum is smaller than the current $\delta_i$). Because the existence of feasible assignment for $d^*$ is monotonic in $d^*$, using binary search on $\Delta$ is sufficient to find the largest $d^* \leq d$ s.t. $\textproc{SolveFeasibleAssignment}(\row{i}(K), \row{j}(\DY), d^*)=\mathrm{TRUE}$. It follows that the increase in time complexity of the modified \textproc{CheckTheoremB} (and therefore \textproc{FindLowerBound}) is only by a factor of $O(\log N)$.
    
    The modification would be able to provide better bounds by additionally trying smaller distance samples with larger entries in Theorem \ref{theorem 2}. In particular, it is guaranteed to prove the baseline lower bound $\dmGH(X, Y) \geq \frac{1}{2}|\diam X - \diam Y|$. Assuming without loss of generality $\diam X \geq \diam Y$, the call to $\textproc{FindLarge$K$}(\DX, \diam X)$ from $\textproc{VerifyLowerBound}(\DX, \DY, \diam X)$ will yield $K$ with non-zero entries that each are at least $(\diam X - \diam Y)$ away from all $\DY$ entries. Because the $(\diam X)$-bounded $K$ is also $(\diam X - \diam Y)$-bounded, the modified $\textproc{CheckTheoremB}(K, \DY, \diam X)$ will output $d^* \geq \diam X - \diam Y$, ensuring that the resulting lower bound is at least $\frac{1}{2}|\diam X - \diam Y|$.
    
    This is in contrast with the original \textproc{FindLowerBound}, in which the only easy guarantee for proving the baseline lower bound via Theorem \ref{theorem 2} is the presence of $\diam X$ values in $K$ produced by $\textproc{FindLarge$K$}(\DX,\\ \diam X - \diam Y)$. However, because $\textproc{FindLarge$K$}$ prioritizes the number of entries $\geq \diam X - \diam Y$ in a row of $\DX$ over their magnitude, and because a point with a distance $\diam X$ can have arbitrarily many distances $< \diam X - \diam Y$, establishing practical conditions for the output of \textproc{FindLowerBound} to be $\geq \frac{1}{2}(\diam X - \diam Y)$ is non-trivial.
\end{remark}

\section{Upper bound for $\dmGH(X, Y)$}
\label{upper bound}
\subsection{Construction method}
To obtain an upper bound of $\dmGH(X, Y)$, we recall the definition $$\dmGH(X, Y) \defeq \frac{1}{2}\max \{\inf_{\varphi} \dis \varphi , \inf_{\psi} \dis \psi \},$$ where $\varphi$ and $\psi$ are the mappings $\varphi:X \to Y$ and $\psi:Y \to X$ and the infimums are taken over the corresponding function spaces. It follows that $\dmGH(X, Y) \leq \frac{1}{2}\max \{\dis \varphi, \dis \psi \}$ for any particular choice of $\varphi$ and $\psi$. Considering the exponential size of function spaces, we rely on a small randomized sample of mappings to tighten the upper bound. To sample $\varphi:X \to Y$, we use the construction method, a heuristic for solving quadratic assignment problems \cite{gilmore1962optimal, burkard1998quadratic}. The construction method iteratively maps each $x \in X$ to some $y \in Y$, chosen by a greedy algorithm to minimize $\dis \varphi$ as described below.

Let $X = \{x_1, \ldots, x_{|X|}\}$ and $Y = \{y_1, \ldots, y_{|Y|}\}$. We randomly choose a permutation $\pi$ of $\I{|X|}$ to represent the order in which the points in $X$ are mapped. At step $i$ we map $x_{\pi(i)}$ by choosing $y_{j_i}$ and setting $\varphi(x_{\pi(i)}) \defeq y_{j_i}$. We represent these choices by inductive construction $R_\varphi^{(i)} = R_\varphi^{(i-1)} \cup \{(x_{\pi(i)}, y_{j_i})\}$ for $i = 1, \ldots, |X|$, where $R_\varphi^{(0)} \defeq \emptyset$. The particular choice of $y_{j_i}$ at step $i$ is made to minimize distortion of resultant $R_\varphi^{(i)}$: $$y_{j_i} \in \argmin_{y \in Y}\,\dis\Big(R_\varphi^{(i-1)} \cup \big\{(x_{\pi(i)}, y)\big\}\Big).$$ After carrying out all $|X|$ steps, $\varphi:X \to Y$ is given by the constructed relation $R_\varphi \defeq R_\varphi^{(|X|)}$, and by definition, $\dis \varphi = \dis R_\varphi$.

Notice the possible ambiguity in the choice of $y_{j_i}$ when $y \in Y$ minimizing $\dis\Big(R_\varphi^{(i-1)} \cup \big\{(x_{\pi(i)}, y)\big\}\Big)$ is not unique. In particular, any $y \in Y$ can be chosen as $y_{j_1}$ at step 1, since $\dis \big\{(x_{\pi(i)}, y_{j_1})\big\} = 0$ is invariant to the said choice. In the case of such ambiguity, our implementation simply decides to map $x_{\pi(i)}$ to $y_{j_i}$ of the smallest index $j_i$. However, in applications one might want to modify this logic to leverage the knowledge of the relationship between the points from two metric spaces.

We formalize the above heuristic under a randomized, $O(N^3)$-time procedure \textproc{SampleSmallDistortion} (see Appendix \ref{SampleSmallDistortion} for the pseudocode) that samples a mapping between the two metric spaces with the intent of minimizing its distortion, and outputs this distortion. We then describe an algorithm \textproc{FindUpperBound} (see Appendix \ref{FindUpperBound} for the pseudocode), that repeatedly calls \textproc{SampleSmallDistortion} to find $\varphi^*:X \to Y$ and $\psi^*:Y \to X$, the mappings of the smallest distortion among those sampled from $X \to Y$ and $Y \to X$, respectively, and finds an upper bound for $\dmGH(X, Y)$ as $\frac{1}{2}\max \{\dis \varphi^*, \dis \psi^*\}$. The time complexity of \textproc{FindUpperBound} is therefore $O(sN^3)$, where $s$ is the total number of sampled mappings.

\subsection{Approximation ratio bounds}
Unlike the lower bound estimation, our heuristic for the upper bound operates within the solution space of $\dmGH$ formulation as a combinatorial minimization problem. This allows leveraging a result for bottleneck assignment problems bounding the ratio between the two extrema of the objective function (see e.g. Theorem 2 in \cite{burkard1985probabilistic}) to understand better the convergence of \textproc{SampleSmallDistortion}. Indeed, solving $\inf_{\phi:X\to Y} \dis \phi$ can be cast as a variant of the QBAP with the cost function $c(i, j, k, h) \defeq |\DX_{i,j} - \DY_{k,h}|$ that allows non-injective assignments and hence has the search space of size $|Y|^{|X|}$. Assuming $|X| > 1$, the maximum of its objective function $\dis \phi$ on the feasible region is $d_{\max}$. We consider a scaled but otherwise identical minimization problem with the cost function $\widetilde{c}(i, j, k, h) \defeq \frac{|\DX_{i,j} - \DY_{k,h}|}{d_{\max}} \in [0, 1]$, whose objective function has the minimum of $\frac{\inf_{\phi:X\to Y} \dis \phi}{d_{\max}}$.
The result by Burkard and Finke \cite{burkard1985probabilistic} treats the values of $\widetilde{c}$ as identically distributed random variables, and states that if $|X|$ and $|Y|$ satisfy \begin{align*}\label{condition} \mathbb{P}\left[\frac{|\DX_{i,j} - \DY_{k,h}|}{d_{\max}} < 1-\frac{\epsilon}{2}\right] \leq 1-q \tag{$\star$} \end{align*} for some $\epsilon, q \in [0, 1]$, then the approximation ratio of the solution by the maximum is probabilistically bounded by $$\mathbb{P}\left[\frac{d_{\max}}{\inf_{\phi:X\to Y} \dis \phi} < 1 + \epsilon\right] \geq 1 - \exp(-q|X|^2 + |X| \log |Y|).$$

Let $b$ be the smallest distortion discovered by \textproc{SampleSmallDistortion} after a number of tries, and consider some $r \geq 1$. If for $\epsilon \defeq \frac{rd_{\max}}{b}-1$ we can find $q$ that satisfies both (\ref{condition}) and $q > \frac{\log |Y|}{|X|}$, the above result allows us to bound the probability that the approximation ratio of $b$ is worse than $r$:
\begin{align*}
    \mathbb{P}\left[\frac{b}{\inf_{\phi:X\to Y} \dis \phi} > r\right] &= \mathbb{P}\left[\frac{d_{\max}}{\inf_{\phi:X\to Y} \dis \phi} > \frac{rd_{\max}}{b}\right] \\&= 1 - \mathbb{P}\left[\frac{d_{\max}}{\inf_{\phi:X\to Y} \dis \phi} < 1 + \epsilon\right] \\&\leq \exp(-q|X|^2 + |X| \log |Y|).
\end{align*}
An analogous probabilistic bound applies to the minimization of distortion of $\psi:Y \to X$. These bounds can be used to adaptively control the sample size $s$ in \textproc{FindUpperBound} for achieving some target accuracy with the desired level of confidence.

\section{Algorithm for estimating $\dmGH(X, Y)$}
\label{algorithm}
The algorithm for estimating the mGH distance between compact metric spaces $X$ and $Y$ (``the algorithm'') consists of the calls $\textproc{FindLowerBound}(\DX, \DY)$ and $\textproc{FindUpperBound}(\DX, \DY)$. Notice that $\dmGH(X, Y)$ is found exactly whenever the outputs of the two procedures match. Time complexity of the algorithm is $O(N^7)$ whenever $s$, the number of mappings sampled from $X \to Y$ and $Y \to X$ in \textproc{FindUpperBound}, is limited by $O(N^4)$. % If traversing the values of $\Delta$ from largest to smallest in \textproc{FindLowerBound} is replaced with the binary search, the complexity becomes $O(N^3 \log N)$ under the assumption of $s = O(\log N)$.

%We will now consider the case of metric spaces induced by unweighted undirected graphs. We implement 
%We show that estimating the mGH between such metric spaces in applications has time complexity $O(N^3 \log^2 N)$, and present our implementation of the algorithm in Python.

To obtain a more practical result, we now consider a special case of metric spaces induced by unweighted undirected graphs. We show that estimating the mGH distance between such metric spaces in many applications has time complexity $O(N^3 \log N)$, and present our implementation of the algorithm in Python.

\subsection{$\dmGH$ between unweighted undirected graphs}
Let $G = (V_G, E_G)$ be a finite undirected graph. For every pair of vertices $v, v' \in V_G$, we define $d_G(v, v')$ as the shortest path length between $v$ to $v'$. If weights of the edges in $E_G$ are positive, the resulting function $d_G : V_G \times V_G \to [0, \infty]$ is a metric on $V_G$. We say that the metric space $(V_G, d_G)$ is induced by graph $G$, and notice that its size $|V_G|$ is the order of $G$. By convention, the shortest path length between vertices from different connected components of a graph is defined as $\infty$, and therefore $(V_G, d_G)$ is compact if and only if graph $G$ is connected.

For brevity, we will use notation $G \defeq (V_G, d_G)$, assuming that the distinction between graph $G$ and metric space $G$ induced by this graph is clear from the context. In particular, we refer to the mGH distance between compact metric spaces induced by undirected connected graphs $G$ and $H$ as $\dmGH(G, H)$, or even call it ``the mGH distance between graphs $G$ and $H$''.

Let $G, H$ be connected unweighted undirected graphs. Notice that $\dmGH(G, H) = 0$ if and only if graphs $G$ and $H$ are isomorphic. We use the following result to reduce the computational cost of estimating $\dmGH(G, H)$ as compared to that in the general case.

\begin{claim}
	If $G$ is a finite connected unweighted undirected graph, all entries in distance matrix $D^G$ of the corresponding compact metric space are from $\{0, 1, \ldots, \diam G\}$.
	\label{claim 7}
\end{claim}
% \begin{proof}
% 	Any path in an unweighted connected graph is of non-negative integer length that, by definition, does not exceed the diameter of the graph.
% \end{proof}
Claim \ref{claim 7} implies that there are at most $d_{\max} + 1$ distinct entries in any distance sample of either $G$ or $H$, where $d_{\max} \defeq \max \{\diam G, \diam H\}$. Recall that \textproc{SolveFeasibleAssignment} traverses the two sorted vectors strictly left to right, in each vector keeping track only of its part after the last assigned entry. It follows that both vectors can be represented simply as counts of entries $0, 1, \ldots, d_{\max}$ they contain, with traversing a vector corresponding to decrementing the leftmost non-zero count. In particular, assigning an entry in one vector to an entry in another is then equivalent to decrementing the corresponding counts in each of the vectors, which reflects the injectivity of the assignment being constructed. Notice that if $v_i$ is assigned to $u_j$ at some step and the remaining entries of the two vectors contain $v_i$ and $u_j$, respectively, \textproc{SolveFeasibleAssignment} will make an identical assignment at the next step. It follows that representing the vectors by counting their identical entries allows the procedure to make bulk assignments, which reduces the time complexity of \textproc{SolveFeasibleAssignment} from $O(N)$ to $O(d_{\max})$ and makes the complexity of \textproc{CheckTheoremB} $O(N^2 d_{\max})$, where $N \defeq \max \{|V_G|, |V_H|\}$.

\begin{remark}
	From the perspective of optimization theory, representing vectors as counts of their entries reformulates the linear assignment feasibility problem of \textproc{SolveFeasibleAssignment} as a transportation feasibility problem.
\end{remark}

Another implication of Claim \ref{claim 7} narrows the set of absolute differences between the distances in $G$ and $H$ to $\Delta = \{0, 1, \ldots, d_{\max}\}$, reducing the time complexity of traversing its elements from $O(N^4)$ to $O(d_{\max})$. This bounds the time complexity of \textproc{FindLowerBound} by $O(N^3 d_{\max})$. The complexity of the entire algorithm is therefore $O(N^3 d_{\max})$ when the number of sampled mappings is $s = O(d_{\max})$.

Network diameter often scales logarithmically with network size, e.g. in Erd\H{o}s--R\'enyi random graph model \cite{albert2002statistical} and Watts--Strogatz small-world model \cite{watts1998collective}, or even sublogarithmically, e.g. in the configuration model with i.i.d. degrees \cite{hofstad2007phase} and Barab\'asi--Albert preferential attachment model \cite{cohen2003scale}. This suggests the time complexity of the algorithm in applications to be $O(N^3 \log N)$, deeming it practical for shape comparison for graphs of up to a moderate order.

\subsection{Implementation}
\label{implementation}
We have implemented the algorithm for estimating mGH between unweighted graphs in Python 3.7 as part of \verb|scikit-tda| package \cite{saul2019scikit} (\url{https://github.com/scikit-tda}). Our implementation takes adjacency matrices (optionally in sparse format) of unweighted undirected graphs as inputs. If an adjacency matrix corresponds to a disconnected graph, the algorithm approximates it with its largest connected component. The number of mappings to sample from $X \to Y$ and $Y \to X$ in \textproc{FindUpperBound} is parametrized as a function of $|X|$ and $|Y|$.

%, with the default values empirically chosen to be $s = \nint{\log^2 |X|} + 1$ and $s = \nint{\log^2 |Y|} + 1$, respectively.

%$s = \ceil{\sqrt{|X|} \log (|X| + 1)}$ and $s = \ceil{\sqrt{|Y|} \log (|Y| + 1)}$, respectively.

\section{Computational examples}
\label{computational examples}
This section demonstrates the performance of the algorithm on some real-world and synthetic networks. The real-world networks were sourced from the Enron email corpus \cite{klimt2004enron}, the cybersecurity dataset collected at Los Alamos National Laboratory \cite{kent2016cyber}, and the functional magnetic resonance imaging dataset ABIDE I \cite{di2014autism}.

\subsection{Methodology and tools}
\label{methodology and tools}
To estimate mGH distances between the graphs, we use our implementation of the algorithm from subsection \ref{implementation}. We set the number of mappings from $X \to Y$ and $Y \to X$ to sample by the procedure \textproc{FindUpperBound} to $\left\lceil \sqrt{|X|} \log(|X|+1) \right\rceil$ and $\left\lceil \sqrt{|Y|} \log(|Y|+1) \right\rceil$, respectively.
%Whenever a graph consists of multiple connected components, we replace it with the largest one of them to ensure compactness of the generated metric space.

Given a pair of connected graphs, we approximate the mGH distance between them by $\tilde{d} \defeq \frac{b_L + b_U}{2}$, where $b_L$ and $b_U$ are the lower and upper bounds produced by the algorithm. The relative error of the algorithm is estimated by $$\eta \defeq \begin{cases}
\frac{b_U - b_L}{2\tilde{d}} = \frac{b_U - b_L}{b_L + b_U}, & \text{if $b_U > 0$}\\
0, & \text{if $b_U = 0$}
\end{cases},$$ noting that the case of $b_U = 0$ implies that the algorithm has found the mGH distance of 0 exactly. In addition, we compute the utility coefficient defined by $$\upsilon \defeq \begin{cases}
\frac{b_L - b'_L}{2\tilde{d}} = \frac{b_L - b'_L}{b_L + b_U}, & \text{if $b_U > 0$}\\
0, & \text{if $b_U = 0$}
\end{cases},$$ where $b'_L \leq b_L$ is the baseline lower bound: $b'_L \defeq \frac{1}{2}|\diam X - \diam Y| \leq \dmGH(X, Y)$ for $X, Y \in \MM$. The utility coefficient thus quantifies the tightening of the lower bound achieved by using Theorems \ref{theorem 1} and \ref{theorem 2}. %Equivalently, $\upsilon = \eta' - \eta$, where $\eta' \defeq \frac{b_U - b'_L}{2\tilde{d}}$ is the relative error of the algorithm if $b'_L$ was used as the lower bound.

Due to the possible suboptimality of the mappings selected by using the construction method (see section \ref{upper bound}), the upper bound may not be computed accurately enough. From the definition of relative error $\eta$ and utility coefficient $\upsilon$, a sufficiently loose upper bound can make $\eta$ arbitrarily close to 1 and $\upsilon$ --- arbitrarily small.

We measured $\eta$ and $\upsilon$ separately for each dataset. For the real-world data, we also used the approximated distances $\tilde{d}$ to identify graphs of outlying shapes and matched these graphs to events or features of importance in application domains, following the approach taken in e.g. \cite{bunke2004classification, bunke2006computer, koutra2013deltacon, pincombe2005anomaly}. Unlike \cite{bunke2004classification, koutra2013deltacon}, and \cite{pincombe2005anomaly} that focus on local time outliers (under the assumption of similarity between graphs from consecutive time steps), we considered the outliers with respect to the entire time range (where applicable), similarly to \cite{bunke2006computer}. %This is inapplicable to ABIDE I dataset due to its lack of time dynamics.

To identify the outliers, we applied the Local Outlier Probability (LoOP) method \cite{kriegel2009loop} to the graphs using their approximated pairwise mGH distances. LoOP uses a local space approach to outlier detection and is robust with respect to the choice of parameters \cite{kriegel2009loop}. The score assigned by LoOP to a data object is interpreted as the probability of the object to be an outlier. We binarize the score into outlier/non-outlier classification using the commonly accepted confidence levels of 95\%, 99\%, and 99.9\% as thresholds, chosen empirically for each individual dataset. We used the implementation of LoOP in Python \cite{constantinou2018pynomaly} (version 0.2.1), modified by us to allow non-Euclidean distances between the objects. We ran LoOP with locality and significance parameters set to $k = 20$ and $\lambda = 1$, respectively.

The synthetic graphs were generated according to Erd\H{o}s--R\'enyi, Watts--Strogatz, and Barab\'asi--Albert network models. We used implementations of the models provided in \cite{hagberg2008exploring} (version 2.1).

All computations were performed on a single 2.70GHz core of Intel i7-7500U CPU.

\subsection{Enron email corpus}
Enron email corpus (available at \url{https://www.cs.cmu.edu/~./enron/}) represents a collection of email conversations between the employees, mostly senior management, of the Enron corporation from October 1998 to March 2002. We used the latest version of the dataset from May 7, 2015, which contains roughly 500K emails from 150 employees.

Associating employees with graph vertices, we view the dataset as a dynamic network whose 174 instances reflect weekly corporate email exchange over the course of 3.5 years. An (unweighted) edge connecting a pair of vertices in a network instance means a mutual exchange of at least one email between the two employees on a particular week. The time resolution of 1 week was suggested in \cite{sulo2010meaningful} for providing an appropriate balance between noise reduction and information loss in Enron dataset.

We expected major events related to the Enron scandal in the end of 2001 to cause abnormal patterns of weekly email exchange between the senior management, distorting the shape of the corresponding network instances. As a consequence, metric spaces generated by such network instances would be anomalously far from the rest with respect to the mGH distance.

\begin{figure*}[!b]
	\includegraphics[width=\linewidth]{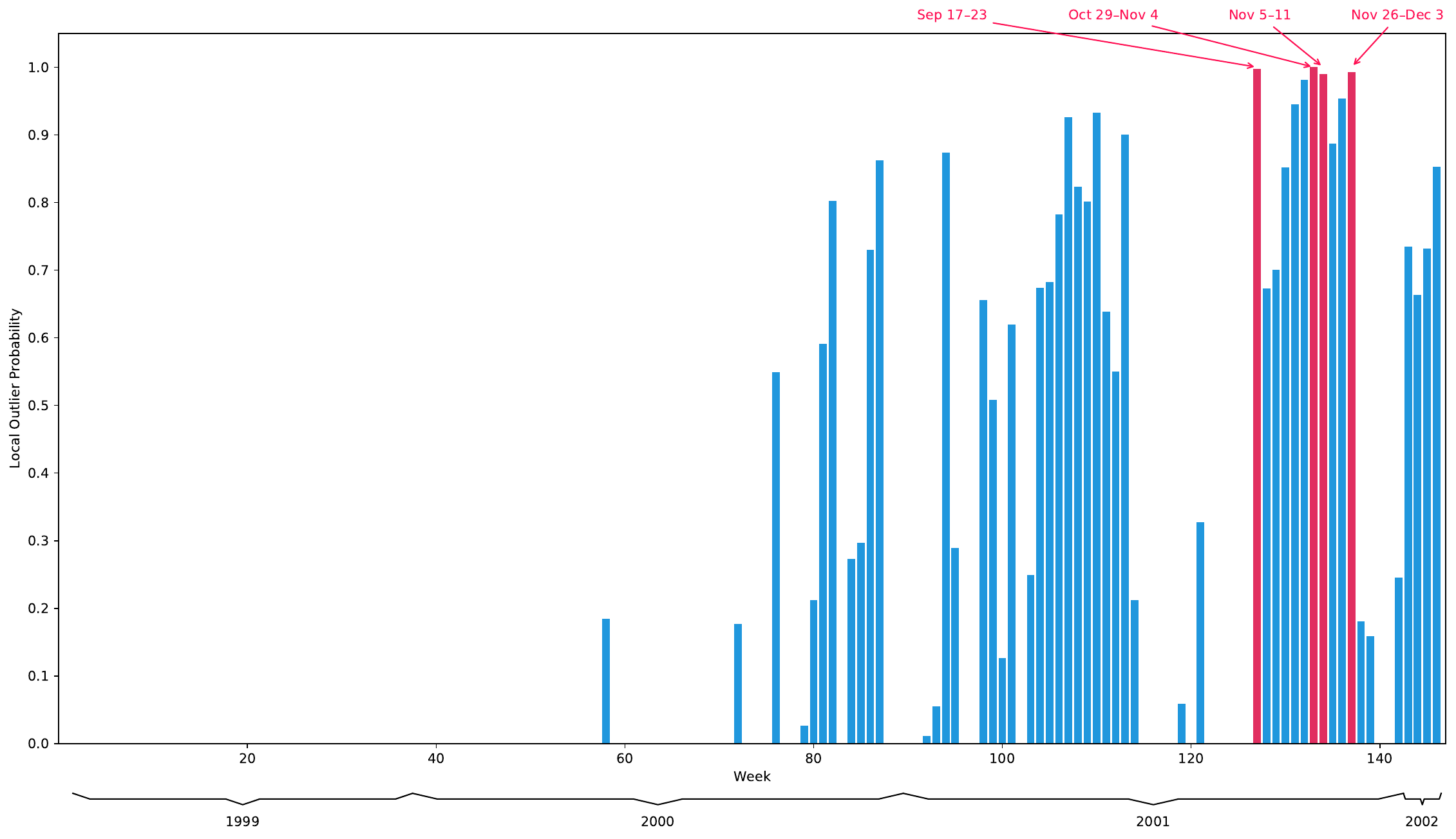}
	\caption{Outlier probabilities assigned to the weekly email exchange networks. Red indicates outlier probabilities $> 0.99$, corresponding to the weeks of Sep 17, Oct 29, Nov 5, and Nov 26 in the year 2001.}
	\label{fig:enron_loop}
\end{figure*}

In preparation for the analysis, we discarded all empty network instances corresponding to the weeks of no email exchange between the employees (of which all 28 weeks happened before May 1999). Each of the remaining 146 graphs was then replaced with its largest connected component. The distribution of the order of the resulting graphs had a mean of 68.2, a standard deviation of 99.8, and a maximum of 706.

We estimated the mGH distances in all 10,585 distinct pairs of the non-empty connected network instances. Average graph order and computing time per one pair were distributed as $68.2 \pm 70.3$ and $0.93\mathrm{s} \pm 3.91\mathrm{s}$, respectively (where $\mu \pm \sigma$ refers to distribution with a mean of $\mu$ and a standard deviation of $\sigma$; no assumptions of normality are made, and we use standard deviation solely as a measure of spread). The algorithm found exact mGH distances in 74.4\% of the graph pairs, with relative error $\eta$ and utility coefficient $\upsilon$ distributed as $0.057 \pm 0.118$ and $0.043 \pm 0.085$, respectively. The ratio between the means of $\upsilon$ and $\eta$ (i.e. $\frac{\upsilon}{\eta} + 1 = \frac{\upsilon + \eta}{\eta} = \frac{b_U - b_{L'}}{b_L - b_{L'}}$) implies that using Theorems \ref{theorem 1} and \ref{theorem 2} on average reduced the relative error by a factor of 1.75.

We ran LoOP on the network instances using their (approximated) pairwise mGH distances $\tilde{d}$. The resulting outlier probability assigned to each network instance (Figure \ref{fig:enron_loop}) thus measures the abnormality of its shape.

To see if the abnormal shape of email exchange corresponds to events of high importance from the Enron timeline, we applied the threshold of 0.99 to the outlier probabilities. Three out of four network instances that scored above the threshold correspond to the weeks of known important events in 2001, namely the weeks of Oct 29, Nov 5, and Nov 26 (each date is a Monday). As the closing stock price of Enron hit an all-time low on Friday, Oct 26, Enron's chairman and CEO Kenneth Lay was making multiple calls for help to Treasure Secretary Paul O'Neill and Commerce Secretary Donald Evans on Oct 28--29. Enron fired both its treasurer and in-house attorney on Nov 5, admitted to overstating its profits for the last five years by \$600M on Nov 8, and agreed to be acquired by Dynegy Inc. for \$9B on Nov 9. On Nov 28, Dynegy Inc. aborted the plan to buy Enron, and on Dec 2, Enron went bankrupt.

We conclude that the abnormal shape of email exchange networks tends to correspond to disturbances in their environment, and that the algorithm estimates the mGH distance accurately enough to capture it.

\subsection{LANL cybersecurity dataset}
Los Alamos National Laboratory (LANL) cybersecurity dataset (available at \url{https://csr.lanl.gov/data/cyber1/}) represents 58 consecutive days of event data collected from LANL's corporate computer network \cite{kent2016cyber}. For our purposes, we considered its part containing records of authentication events, generated by roughly 11K users on 18K computers, and collected from individual Windows-based desktops and servers. During the 58-day data collection period, a “red team” penetration testing operation had taken place. As a consequence, a small subset of authentications were labeled as red team compromise events, presenting well-defined bad behavior that differed from normal user and computer activity. The labeling is not guaranteed to be exhaustive, and authentication events corresponding to red team actions, but not labeled as such, are likely to be present in the data. \cite{heard2016network}

Each authentication event occurs between a pair of source and destination computers. Viewing the computers as graph vertices, we associated each user with a dynamic network, whose instances reflect their daily authentication activity within the 58-day period. An (unweighted) edge connecting a pair of vertices in a network instance means that at least one authentication event by the user has occurred between the two computers on a particular day. The user-based approach to graph representation of the data aims to capture the patterns of user account misuse that are expected to occur during a cyberattack.

Our objective was to develop an unsupervised approach that can identify the red team activity associated with a user's account. We expected that frequent compromise events within the course of one day should distort the shape of the corresponding network instance. As a consequence, metric spaces generated by such network instances would be anomalously far from the rest.

For the analysis, we selected 20 users with the highest total of associated red team events and randomly chose another 20 users from those unaffected by the red team activity (see Figure \ref{fig:redteam}). Each of their 40 dynamic networks initially comprised of 58 instances. We discarded all empty network instances corresponding to the days of inactivity of a user account, and replaced each of the remaining 1,997 graphs with its largest connected component. The distribution of the order in the resulting graphs had a mean of 32.7 and a standard deviation of 75.8. The largest graphs were associated with the red team-affected user U1653@DOM1 --- their order was distributed with a mean of 178.2, a standard deviation of 391.5, and a maximum of 2,343.

\begin{figure*}[!t]
	\includegraphics[width=\linewidth]{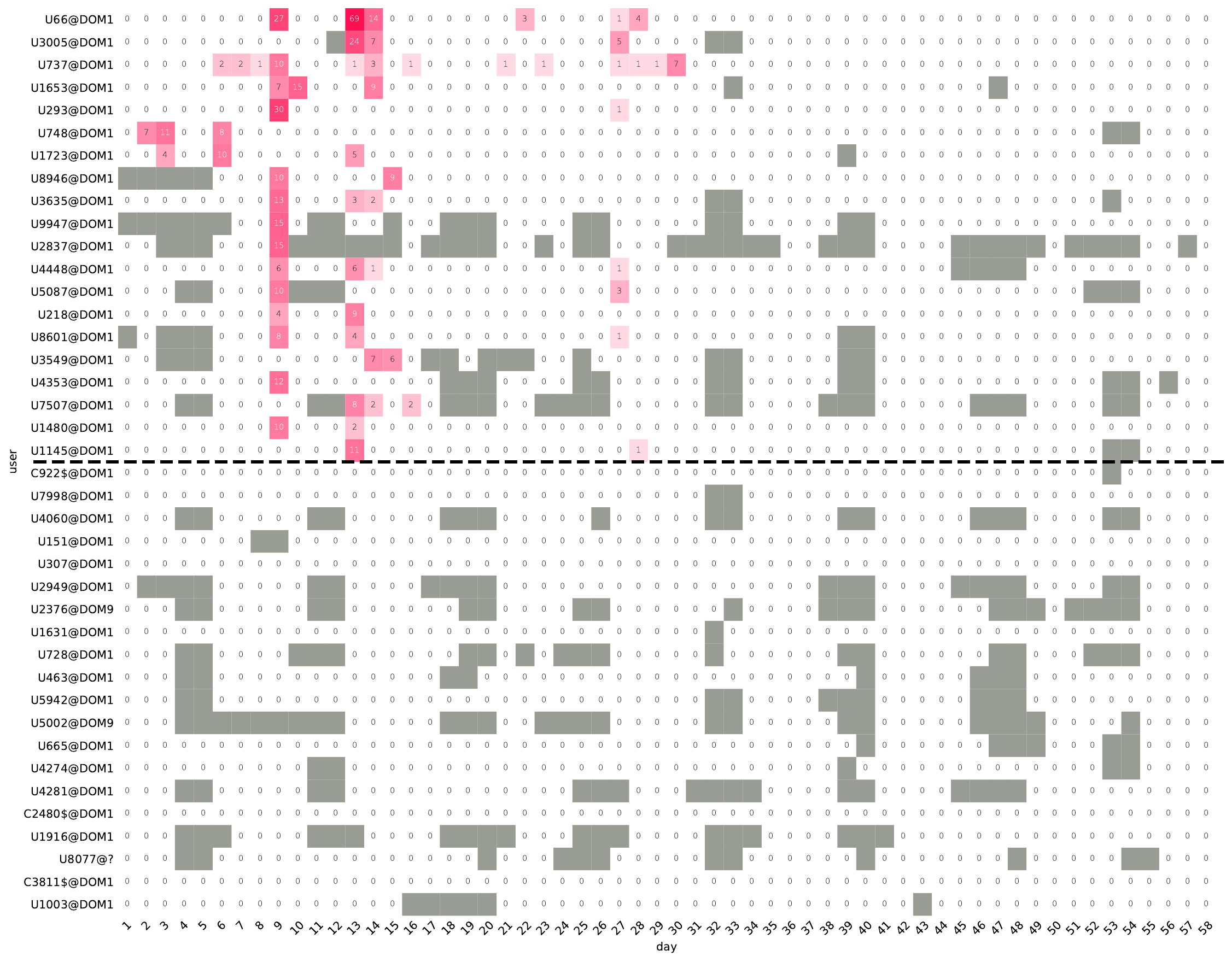}
	\caption{Frequency of red team events in daily authentication activity of the selected users. Grey indicates days of no authentication activity by user. Dashed line separates the two groups of 20 users.}
	\label{fig:redteam}
\end{figure*}

Separately for each of the selected users, we estimated the mGH distances in all distinct pairs of the non-empty connected network instances associated with her account. Table \ref{tab:LANL performance} shows average graph order in a pair and the performance metrics of the algorithm, aggregated per user subsets. We notice that using Theorems \ref{theorem 1} and \ref{theorem 2} has reduced the relative error by a factor of 3.5 on average. In addition, the algorithm did not seem to perform worse on the larger graphs associated with user U1653@DOM1.

\begin{table*}[!t]
	%	\vspace*{2em}
	%%	\hspace*{4.5em}
	\begin{tabular}{|c|c|c|c|c|c|c|}
		\cline{2-7}
		\multicolumn{1}{c|}{} & \# of & average & computing & exact & relative & utility \\
		\multicolumn{1}{c|}{} & pairs & graph order & time & distances & error $\eta$& coefficient $\upsilon$ \\
		\hhline{|-|-|-|-|-|-|-|}
		\multirow{2}{*}{all 40 users} & \multirow{2}{*}{50316} & \multirow{2}{*}{$34.2 \pm 60.7$} & \multirow{2}{*}{$0.44 \mathrm{s} \pm 14.09 \mathrm{s}$} & \multirow{2}{*}{84.7\%}&  \multirow{2}{*}{$0.049 \pm 0.121$}&  \multirow{2}{*}{$0.121 \pm 0.127$}\\
		&&&&&&\\
		\hline
		\multirow{2}{*}{U1653@DOM1} & \multirow{2}{*}{1540} & \multirow{2}{*}{$178.2 \pm 274.3$} & \multirow{2}{*}{$13.52 \mathrm{s} \pm 79.46 \mathrm{s}$} & \multirow{2}{*}{94.2\%}&  \multirow{2}{*}{$0.012 \pm 0.051$}&  \multirow{2}{*}{$0.135 \pm 0.123$}\\
		&&&&&&\\
		\hline
		\multirow{2}{*}{other 39 users} & \multirow{2}{*}{48776} & \multirow{2}{*}{$29.7 \pm 27.3$} & \multirow{2}{*}{$0.028 \mathrm{s} \pm 0.058 \mathrm{s}$} & \multirow{2}{*}{84.4\%}&  \multirow{2}{*}{$0.051 \pm 0.122$}&  \multirow{2}{*}{$0.120 \pm 0.127$}\\
		&&&&&&\\
		\hline
	\end{tabular}
	\vspace*{1em}
	\caption{Performance of the algorithm on user-based daily authentication graphs. $\mu \pm \sigma$ denotes that the distribution of a variable across the graph pairs has a mean of $\mu$ and a standard deviation of $\sigma$.}
	%	\vspace*{1em}
	\label{tab:LANL performance}
\end{table*}

\begin{figure*}[!t]
	\includegraphics[width=\linewidth]{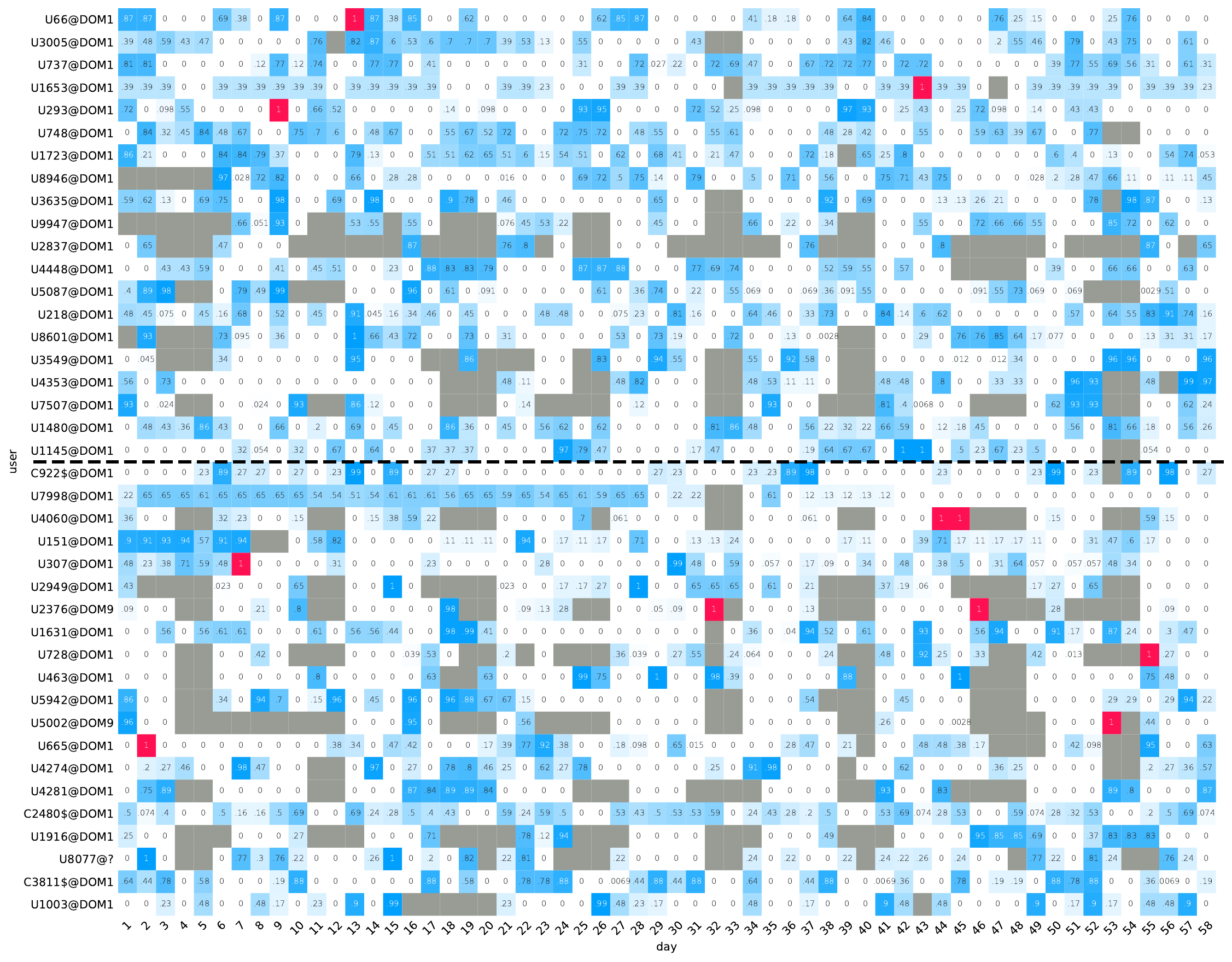}
	\caption{Outlier probability assigned to user-based daily authentication graphs. Red indicates outlier probabilities $> 0.999$. Grey indicates empty graphs (excluded from analysis). The dashed line separates the two groups of 20 users.}
	\label{fig:loop}
\end{figure*}

Separately for each user, we ran LoOP on the associated network instances using their (approximated) pairwise mGH distances $\tilde{d}$. The resulting outlier probability assigned to each network instance (Figure \ref{fig:loop}) thus measures the abnormality of its shape for the particular user.

To see if the days of high compromise activity can be identified from the abnormal shape of the corresponding network instances, we approached the identification as a binary classification task. User's daily activity is considered a compromise if it includes at least 30 red team events, and is predicted as such if the outlier probability assigned to the corresponding network instance is $>0.999$. The resulting confusion matrix of our shape-based binary classifier is shown in Table \ref{confusion matrix}, establishing its accuracy, precision, and recall as 99.5\%, 18.2\%, and 100\%, respectively.

\begin{table*}[!t]
	%	\hspace{-3em}
%	\hspace{-1.5em}
	\begin{tabular}{l|l|c|c|c}
		\multicolumn{2}{c}{}&\multicolumn{2}{c}{predicted}&\\
		\cline{3-4}
		\multicolumn{2}{c|}{}&compromise&not compromise&\multicolumn{1}{c}{total}\\
		\cline{2-4}
		\multirow{2}{*}{actual}& compromise & 2 & 0 & 2\\
		\cline{2-4}
		& not compromise & 9 & 1986 & 1995\\
		\cline{2-4}
		\multicolumn{1}{c}{} & \multicolumn{1}{c}{total} & \multicolumn{1}{c}{11} & \multicolumn{1}{c}{1986} & \multicolumn{1}{c}{1997}\\
	\end{tabular}
	\vspace*{.5em}
	\caption{Confusion matrix of the shape-based binary classifier.}
	\label{confusion matrix}
\end{table*}

%\vspace*{-1.5em}
Even though recall is prioritized over precision when identifying intrusions, low precision can be impractical when using the classifier alone. However, high recall suggests that combining the shape-based classifier with another method can improve performance of the latter. For example, classifying daily activity as a compromise if and only if both methods agree on it is likely to increase the other method's precision without sacrificing its recall.

We conclude that sufficiently frequent compromise behavior tends to distort the shape of networks representing user authentication activity, and that the algorithm estimates the mGH distance accurately enough to pick up the distortion. However, regular operations can cause similar distortion, and the shape of user authentication alone may be insufficient to accurately identify compromise behavior.

\subsection{ABIDE I dataset}
The Autism Brain Imaging Data Exchange I (ABIDE I, \url{http://fcon_1000.projects.nitrc.org/indi/abide/abide_I.html}) \cite{di2014autism} is a resting state functional magnetic resonance imaging dataset, collected to improve understanding of the neural bases of autism. Besides the diagnostic group (autism or healthy control), the information collected from the study subjects also includes their sex, age, handedness category, IQ, and current medication status. We considered the preprocessed version of ABIDE I used in \cite{kazeminejad2019topological}, containing brain networks of 816 subjects, including 374 individuals with autism spectrum disorder and 442 healthy controls.

Brain network of an individual is comprised of 116 nodes, each corresponding to a region of interest (ROI) in the automatic anatomical labeling atlas of the brain \cite{tzourio2002automated}. Connectivity of the network represents correlations between the brain activity in the ROI pairs, with the brain activity extracted as a time series for each region. Namely, an (unweighted) edge connects two nodes if Spearman's rank correlation coefficient between the corresponding pair of time series is in the highest 20\% of all such coefficients for distinct ROI pairs. This approach of thresholding connectivity to obtain an unweighted graph representation is commonly used for functional brain networks \cite{achard2007efficiency, supekar2008network, redcay2013intrinsic, rudie2013altered, rubinov2010complex}.

\begin{figure*}[!b]
	\includegraphics[width=\linewidth]{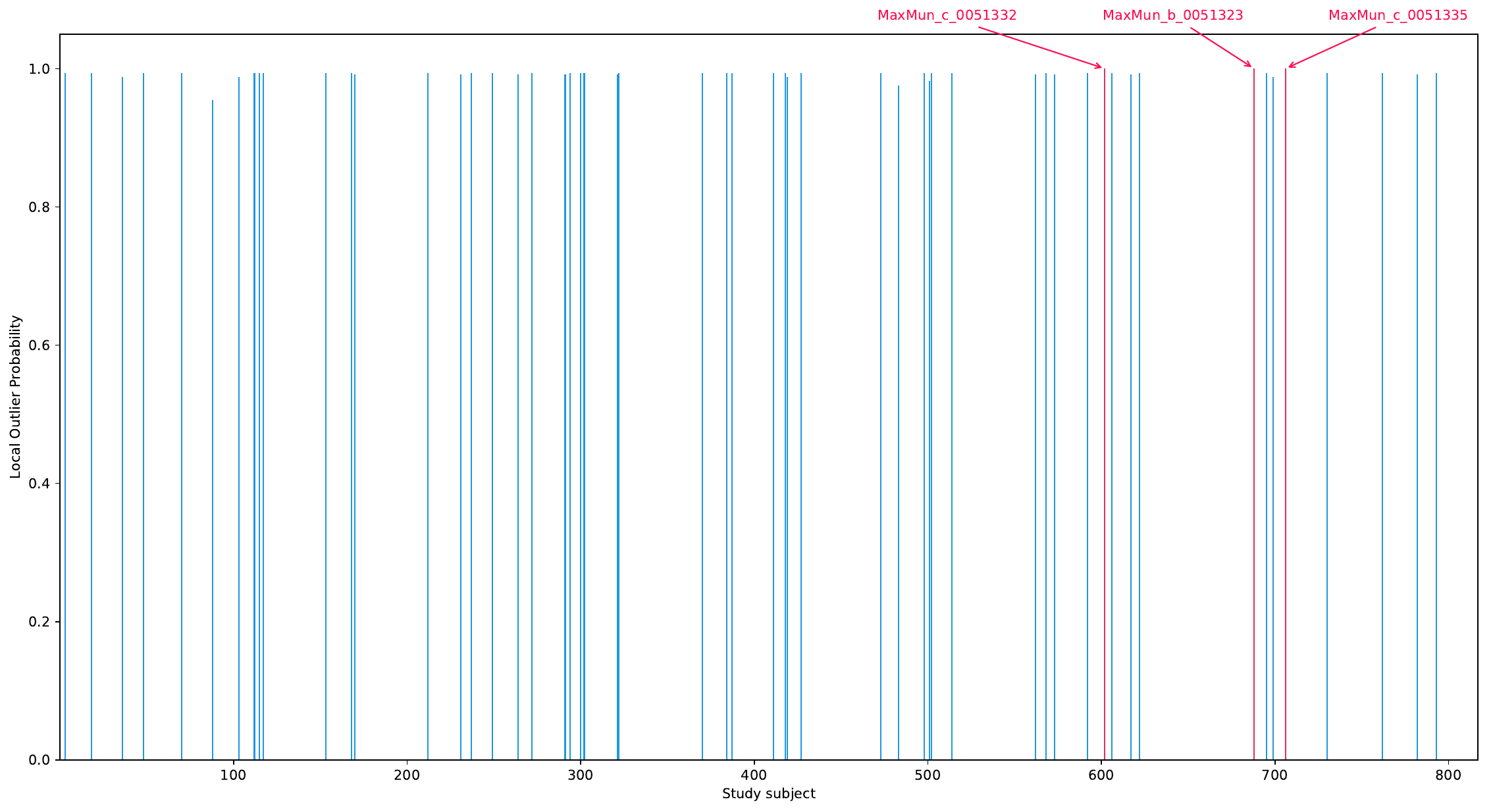}
	\caption{Outlier probability assigned to brain networks of study subjects. Blue indicates outlier probabilities $> 0.95$, and red --- outlier probabilities $> 0.999$. The latter correspond to the subjects MaxMun\_c\_0051332, MaxMun\_b\_0051323, and MaxMun\_c\_0051335. The remaining outlier probabilities are 0.}
	\label{fig:brain_loop}
\end{figure*}

We replaced the brain network of each subject with its largest component, which did not introduce a significant change to the data: the distribution of the order in the resulting 816 graphs had a mean of about 116.0 and a standard deviation of 0.1.

We estimated the mGH distances in all 332,520 distinct pairs of the connected brain networks. Average graph order and computing time per one pair were distributed as $116.0 \pm 0.1$ and $0.40\mathrm{s} \pm 20.29\mathrm{s}$, respectively. The algorithm found exact mGH distances in 78.6\% of the graph pairs, with relative error $\eta$ and utility coefficient $\upsilon$ distributed as $0.072 \pm 0.139$ and $0.361 \pm 0.214$, respectively. We notice that using Theorems \ref{theorem 1} and \ref{theorem 2} has reduced the relative error by a factor of 5 on average.

We ran LoOP on the brain networks using their (approximated) pairwise mGH distances $\tilde{d}$. The resulting outlier probability assigned to each brain network (Figure \ref{fig:brain_loop}) thus measures the abnormality of its shape.

To see if abnormal shape of a brain network corresponds to certain features of the individual, we applied the thresholds of 0.999 and 0.95 to the outlier probabilities. We were unable to identify the corresponding brain network shapes with outlying values of the available measurements, neither for the 3 subjects who scored above the threshold of 0.999 nor for the 54 subjects who scored above 0.95.

To see if the brain networks of subjects within the same diagnostic group tend to have similar shape, we performed cluster analysis based on the mGH distances $\tilde{d}$. We used \verb|scipy| implementation of hierarchical agglomerative clustering algorithm to split the 816 networks into two clusters (by the number of diagnostic groups in the dataset). The smaller cluster was comprised of the same 3 subjects who scored above the outlier probability threshold of 0.999. Discarding them as outliers and rerunning the analysis resulted in two clusters of 51 and 762 subjects, respectively. The clusters did not show any correspondence with the diagnostic groups, thus providing no evidence that the within-group mGH distances are smaller than the inter-group ones. However, we notice a significant overlap between the 54 subjects with an outlier probability above 0.95 and the cluster of 51 individuals, with 47 people shared between the two groups. This implies that the abnormal brain networks tend to be closer to one another than to the ``regular'' brain networks. This observation suggests that abnormality of the brain network shape is influenced by currently unknown features which are not included in the dataset.

We conclude that the algorithm estimates the mGH distance between Spearman's correlation-based functional brain networks with high accuracy. However, detected shape abnormalities do not seem to correspond to a conclusive pattern related to autism spectrum disorder identification.

\subsection{Synthetic networks}

To test performance of the algorithm on synthetic networks, we generated 100 graphs per each of Erd\H{o}s--R\'enyi (random), Watts--Strogatz (small-world), and Barab\'asi--Albert (scale-free) network models. The order $n$ of each graph was selected uniformly at random between 10 and 200, and other parameters of the model were based on $n$. In the Erd\H{o}s--R\'enyi $G(n, p)$ model, the probability $p$ for an edge between a vertex pair to appear was selected uniformly at random between $\frac{0.5\log n}{n}$ and $\frac{1.5\log n}{n}$. In the Watts--Strogatz $G(n, 2k, p)$ model, $k$, half of the average vertex degree, was selected uniformly at random between $1$ and $\lfloor 0.5 \log^2 n \rceil$, and the probability $p$ for an edge to get rewired was selected uniformly at random between $\frac{0.5\log n}{n}$ and $\frac{1.5\log n}{n}$. In the Barab\'asi--Albert $G(n, m)$ model, the number of edges $m$ to attach from a new node to the existing nodes was selected uniformly at random between $1$ and $\lfloor \log^2 n \rceil$. 

After generating the graphs, we replaced each of them with its largest connected component. For each set, we estimated the mGH distances in all distinct pairs of the 100 connected graphs therein. Table \ref{tab:synth performance} shows the average graph order in a pair and the performance metrics of the algorithm, aggregated per individual data sets.

\begin{table}[!t]
	%	\hspace*{4.5em}
	\begin{tabular}{|c|c|c|c|c|c|c|}
		\cline{2-7}
		\multicolumn{1}{c|}{} & \# of & average & computing & exact & relative & utility \\
		\multicolumn{1}{c|}{} & pairs & graph order & time & distances & error $\eta$& coefficient $\upsilon$ \\
		\hhline{|-|-|-|-|-|-|-|}
		\multirow{2}{*}{Erd\H{o}s--R\'enyi} &\multirow{2}{*}{4950}& \multirow{2}{*}{$101.8 \pm 56.1$} & \multirow{2}{*}{$0.58 \mathrm{s} \pm 0.49 \mathrm{s}$} & \multirow{2}{*}{20.1\%}&  \multirow{2}{*}{$0.222 \pm 0.172$}&  \multirow{2}{*}{$0.054 \pm 0.089$}\\
		&&&&&&\\
		\hline
		\multirow{2}{*}{Watts--Strogatz} &\multirow{2}{*}{4950}& \multirow{2}{*}{$101.5 \pm 55.8$} & \multirow{2}{*}{$0.52 \mathrm{s} \pm 0.57 \mathrm{s}$} & \multirow{2}{*}{50.42\%}&  \multirow{2}{*}{$0.138 \pm 0.170$}&  \multirow{2}{*}{$0.004 \pm 0.028$}\\
		&&&&&&\\
		\hline
		\multirow{2}{*}{Barab\'asi--Albert} &\multirow{2}{*}{4950}& \multirow{2}{*}{$103.5 \pm 37.6$} & \multirow{2}{*}{$0.33 \mathrm{s} \pm 0.34 \mathrm{s}$} & \multirow{2}{*}{57.1\%}&  \multirow{2}{*}{$0.131 \pm 0.157$}&  \multirow{2}{*}{$0.035 \pm 0.088$}\\
		&&&&&&\\
		\hline
	\end{tabular}
	\vspace*{.5em}
	\caption{Performance of the algorithm on the synthesized networks. $\mu \pm \sigma$ denotes that distribution of a variable across the graph pairs has a mean of $\mu$ and a standard deviation of $\sigma$.}
	\label{tab:synth performance}
\end{table}

We notice that the algorithm performs significantly worse on the Erd\H{o}s--R\'enyi graphs. One possible explanation is that there are fewer identically connected vertices in random graphs than in those resembling real-world networks, which contributes to the combinatorial complexity of the search for distortion-minimizing mappings to obtain the upper bound. Recall from subsection \ref{methodology and tools} that inaccurate computations of the upper bound alone can have a detrimental effect on both $\eta$ and $\upsilon$.

Another interesting observation is that Theorems \ref{theorem 1} and \ref{theorem 2} have smaller utility when applied to the Watts--Strogatz graphs. Recall that a Watts--Strogatz small-world graph is generated from a lattice ring with each node connected to its $2k$ neighbors ($k$ for each side), by randomly rewiring a fraction (roughly, $p$) of its edges. For a small $p$, the rewired edges serve as shortcuts between the otherwise remote vertices and have a highly nonlinear effect on the diameter \cite{watts1998collective}. This allows for high variability in the diameters of generated graphs, thus contributing to the tightness of the baseline lower bounds $b'_L \defeq \frac{1}{2}|\diam X - \diam Y|$.

We conclude that the algorithm performs better on graphs with scale-free and small-world properties, observed in many real-world networks.

\section{Conclusion}
\label{conclusion}
The main contribution of this work is a feasible method for finding a lower bound for the mGH (and therefore GH) distance between finite metric spaces. The approach, based on the introduced notion of $d$-bounded distance samples, yields a polynomial-time algorithm for estimating the mGH distance. The algorithm is implemented as part of Python \verb|scikit-tda| library for the case of compact metric spaces induced by unweighted graphs. It is also shown that in the case of unweighted graphs of order $N$ whose diameter scales at most logarithmically with $N$ the algorithm has time complexity $O(N^3 \log N)$. %To the best of our knowledge, this is the first time a tractable algorithm for bounding is given for the broad class of graphs occurring in applications. 

To test the algorithm's performance, we applied it to both real and synthesized networks. Among the synthesized networks we tested, the best performance was observed for graphs with scale-free and small-world properties. We have also found that the algorithm performed well on the real-world email exchange, computer, and brain networks. The mGH distance was used to successfully detect outlying shapes corresponding to events of significance. This suggests that the proposed algorithm may be useful for graph shape matching in various application domains.

\section*{Acknowledgements}
Part of this work was performed during the stay of Vladyslav Oles at Los Alamos National Laboratory, and he acknowledges the provided support. In addition, the authors would like to thank Facundo M\'emoli for detailed clarification of his work and useful comments, Amirali Kazeminejad for sharing preprocessed ABIDE I dataset, Bob Week for insightful conversations, Iana Khotsianivska for helping with figure design, Grace Grimm for reviewing and editing, and Luigi Boschetti for encouragement and inspiration.

\bibliographystyle{alpha}
\bibliography{bib/GH, bib/networks, bib/domain, bib/optimization, bib/ML, bib/graph_matching}

\begin{appendix}
	\section{Procedure \textproc{FindLarge$K$}}
	\label{FindLarge$K$}
	\hrulestart
	\begin{algorithmic}
		\Procedure{FindLarge$K$}{$\DX, d$} \Comment{$O(N^3)$}
		\Input{$\DX \in \R^{|X| \times |X|}$; $d > 0$}
		\Output{$d$-bounded $K \in \K_{\widetilde{M}(X, d)}(X)$}
		%		\Statex
		\State $K \gets \DX$
		\State $i \gets \textproc{FindLeastBoundedRow}(K, d)$ %\Comment{$k$-th row of $K^*$ is ''least $d$-bounded''} %\Comment{find some $k$ s.t. $\rowk(K)$ is ''least $d$-bounded''} %\Comment{\parbox[t]{.5\linewidth}{$k$-th row of $K^*$ contains biggest and positive number of off-diagonal entries $< d$}}
		\While{$i > 0$} \Comment{$i = 0$ if and only if $K$ is $d$-bounded}
		\State$K \gets K_{(i)(i)}$ \Comment{remove $i$-th row and column from $K$}
		\State$i \gets \textproc{FindLeastBoundedRow}(K, d)$  \Comment{decide which row to remove next} %\Comment{find next such $k$}%\Comment{$\rowk(K^*)$ is some ''least $d$-bounded'' row of $K^*$} %\Comment{find next such $k$}
		\EndWhile
		\State \textbf{return} $K$ \Comment{$K$ is a $d$-bounded distance sample of $X$}
		\EndProcedure
	\end{algorithmic}
	\hruleend	
	
	\section{Procedure \textproc{FindLeastBoundedRow}}
	\label{FindLeastBoundedRow}
	\hrulestart
	\begin{algorithmic}
		\Procedure{FindLeastBoundedRow}{$A, d$} \Comment{$O(m^2)$}
		\Input{$A \in \R^{m \times m}$; $d > 0$}
		\Output{$\begin{cases}
			0, & \text{if $A$ is $d$-bounded} \\ 
			\min \left\{i \in \I{m}: \text{$\row{i}(A)$ is smallest least $d$-bounded}\right\}, & \text{otherwise}
			\end{cases}$}
		\Statex
		\State $i^* \gets 0$ %\Comment{index of ''least $d$-bounded'' row}
		\State $n_{i^*} \gets 0$%\Comment{for counting off-diagonal entries $<d$ in $\rowk(A)$}%\Comment{initialize the number of off-diagonal entries $<d$ in $\rowk(A)$}
		\State $s_{i^*} \gets 0$%\Comment{for totaling off-diagonal entries $\geq d$ in $\rowk(A)$}
		\For{$i = 1, \ldots, m$}
		\State $n_i \gets 0$%\Comment{to count off-diagonal entries $<d$ in $\rowi(A)$}
		\State $s_i \gets 0$%\Comment{to total off-diagonal entries $\geq d$ in $\rowi(A)$}
		\For{$j = 1, \ldots, m$} \Comment{\parbox[t]{0.4\linewidth}{count off-diagonal entries $<d$ and sum off-diagonal entries $\geq d$ in $\row{i}(A)$}}
		\If{$i \neq j$}
		\If{$A_{i,j} < d$}
		\State $n_i \gets n_i + 1$
		\Else
		\State $s_i \gets s_i + A_{i,j}$
		\EndIf
		\EndIf
		\EndFor
		\If{$n_i > n_{i^*}$ OR ($n_i = n_{i^*}$ AND $s_i < s_{i^*}$)} \Comment{\parbox[t]{0.35\linewidth}{choose smallest least $d$-bounded row from the first $i$ rows of $A$}}
		\State $i^* \gets i$
		\State $n_{i^*} \gets n_i$
		\State $s_{i^*} \gets s_i$
		\EndIf
		\EndFor
		\State \textbf{return} $i^*$ %\Comment{$\begin{cases}
		%		k=0, & \text{if $A$ is $d$-bounded} \\ k>0, & \text{if $\rowk(A)$ is ''least $d$-bounded'' row}
		%		\end{cases}$}%\Comment{$\rowk(A)$ contains biggest and non-zero number of off-diagonal entries $< d$}
		\EndProcedure
	\end{algorithmic}
	\hruleend

\section{Procedure \textproc{SolveFeasibleAssignment}}
	\label{SolveFeasibleAssignment}
	\hrulestart
	\begin{algorithmic}
		\Procedure{SolveFeasibleAssignment}{$\mathbf{v}$, $\mathbf{u}$, $d$} \Comment{$O(q)$}
		\Input{$\mathbf{v} \in \R^p$ with entries $v_1 \leq \ldots \leq v_p$; $\mathbf{u} \in \R^q$ with entries $u_1 \leq \ldots \leq u_q$; $d > 0$}
		\Output{$\begin{cases}
			\text{TRUE}, & \text{if exists injective $f: \I{p} \to \I{q}$ s.t. $\left|v_t - u_{f(t)}\right| < d \quad \forall t \in \I{p}$} \\ 
			\text{FALSE}, & \text{otherwise}
			\end{cases}$}
		\Statex
		\State $l \gets 1$
		\For{$t = 1, \ldots, p$}
		\While{$|v_t - u_l| \geq d$} \Comment{find smallest available $u_l$ s.t. $|v_t - u_l| < d$}
		\State {$l \gets l + 1$}
		\If {$l > q$}
		\State \textbf{return} FALSE \Comment{no more available entries in $\mathbf{u}$}
		\EndIf
		\EndWhile
		\State $l \gets l + 1$ \Comment{assign $v_t$ to $u_l$, making the latter unavailable}
		\EndFor
		\State \textbf{return} TRUE	\Comment{$f$ can be constructed}
		\EndProcedure
	\end{algorithmic}
	\hruleend

\section{Procedure \textproc{CheckTheoremB}}
	\label{CheckTheoremB}
	\hrulestart
	\begin{algorithmic}
		\Procedure{CheckTheoremB}{$K,\DY,d$} \Comment{$O(N^3)$}
		\Input{$d$-bounded $K \in \K_n(X)$ for some $n \leq |Y|$; $\DY \in \R^{|Y| \times |Y|}$; $d > 0$}
		\Output{$\begin{cases}
			\text{TRUE}, & \text{if, for some $i \in \I{n}$, $\|\row{i}(K) - \row{i}(L)\|_\infty \geq d \quad \forall L \in \ps_n(\DY)$} \\ 
			\text{FALSE}, & \text{otherwise}
			\end{cases}$}
		\State $K \gets \textproc{SortEntriesInRows}(K)$ \Comment{sort entries in every row of $K$}
		\State $\DY \gets \textproc{SortEntriesInRows}(\DY)$ \Comment{sort entries in every row of $\DY$}
		\For{$i = 1, \ldots, n$}
		\State $\texttt{$i$\_satisfies} \gets$ TRUE
		\For{$j = 1, \ldots, |Y|$}
		\If{$\textproc{SolveFeasibleAssignment}(\row{i}(K), \row{j}(\DY), d)$}
		\State $\texttt{$i$\_satisfies} \gets$ FALSE \Comment{$\exists L \in \ps_n^{i\gets j}(\DY)\quad \|\row{i}(K) - \row{i}(L)\|_\infty < d$}
		\EndIf
		\EndFor
		\If{$\texttt{$i$\_satisfies}$}
		\State \textbf{return} TRUE \Comment{$\|\row{i}(K) - \row{i}(L)\|_\infty \geq d \quad \forall L \in \ps_n(\DY)$}
		\EndIf
		\EndFor
		\State \textbf{return} FALSE
		\EndProcedure
		\Statex
		\Procedure{SortEntriesInRows}{$A$} \Comment{$O(m^2 \log m)$}
		\Input{$A \in \R^{m \times m}$}
		\Output{$B \in \R^{m \times m}$ s.t. $\forall i \quad \row{i}(B)$ is a permutation of $\row{i}(A)$ and $B_{i,1} \leq \ldots \leq B_{i, m}$}
		\State \ldots
		\EndProcedure
	\end{algorithmic}
	\hruleend

\section{Procedure \textproc{VerifyLowerBound}}
	\label{VerifyLowerBound}
	\begin{algorithmic}
		\hrulestart
		\Procedure{VerifyLowerBound}{$\DX,\DY,d$} \Comment{$O(N^3)$}
		\Input{$\DX \in \R^{|X| \times |X|}$; $\DY \in \R^{|Y| \times |Y|}$; $d > 0$}
		\Output{$\begin{cases}
			\text{TRUE}, & \text{if verified that $\dmGH(X, Y) \geq \frac{d}{2}$} \\ 
			\text{FALSE}, & \text{if couldn't verify it}
			\end{cases}$}
		\State $K \gets $\textproc{FindLarge$K$}$(D^X, d)$ \Comment{$K \in \R^{n \times n}$}
		\State $L \gets $\textproc{FindLarge$K$}$(D^Y, d)$ \Comment{$L \in \R^{m \times m}$}
		\If{$n > |Y|$ OR $m > |X|$}
		\State \textbf{return} TRUE \Comment{\text{$\dmGH(X, Y) \geq \frac{d}{2}$ from Theorem A}}
		\ElsIf{$\textproc{CheckTheoremB}(K, D^Y, d)$ OR $\textproc{CheckTheoremB}(L, D^X, d)$}
		
		\State \textbf{return} TRUE \Comment{\text{$\dmGH(X, Y) \geq \frac{d}{2}$ from Theorem B}}
		\Else
		\State \textbf{return} FALSE
		\EndIf
		\EndProcedure
		\hruleend
	\end{algorithmic}

\section{Procedure \textproc{FindLowerBound}}
	\label{FindLowerBound}
	\begin{algorithmic}
		\hrulestart
		\Procedure{FindLowerBound}{$\DX,\DY$} \Comment{$O(N^7)$}
		\Input{$\DX \in \R^{|X|\times |X|}$; $\DY \in \R^{|Y|\times |Y|}$}
		\Output{$b_L \in \R$ s.t. $\dmGH(X, Y) \geq b_L$}
		\State $\Delta \gets \emptyset$
		\For{$i = 1, \ldots, |X|$} \Comment{construct $\Delta$}
		\For{$j = 1, \ldots, |X|$}
		\For{$k = 1, \ldots, |Y|$}
		\For{$l = 1, \ldots, |Y|$}
		\State $\Delta \gets \Delta \cup |\DX_{i,j} - \DY_{k,l}|$
		\EndFor
		\EndFor
		\EndFor
		\EndFor
		\State $\{\delta_i\}_{i=1}^{|\Delta|} \gets \textproc{SortSet}(\Delta)$
		%\State $\delta \gets \delta_1$ \Comment{$\delta_1 = 0$}
		\For{$i = |\Delta|, \ldots, 1$} \Comment{find largest $\frac{\delta_i}{2}$ for which the answer is ``yes''}
		\If{$\textproc{VerifyLowerBound}(\DX, \DY, \delta_i)$}
		\State \textbf{return} $\frac{\delta_i}{2}$
		\EndIf
		\EndFor
		\State \textbf{return} $0$
		\EndProcedure
		\Statex
		\Procedure{SortSet}{$T$} \Comment{$O(|T|\log |T|)$}
		\Input{$T \subset \R$}
		\Output{$\{t_i\}_{i=1}^{|T|}$, where $t_1 < \ldots < t_{|T|}$ and $t_i \in T \quad \forall i \in \I{|T|}$}
		\Statex
		\State \ldots
		\EndProcedure
		\hruleend
	\end{algorithmic}

\section{Procedure \textproc{SampleSmallDistortion}}
	\label{SampleSmallDistortion}
	\begin{algorithmic}
		\hrulestart
		\Procedure{SampleSmallDistortion}{$\DX, \DY$} \Comment{$O(N^3)$}
		\Input{$\DX \in \R^{|X|\times |X|}$ s.t. $\DX_{i,j} = d_X(x_i, x_j)$; $\DY \in \R^{|Y|\times |Y|}$ s.t. $\DY_{i,j} = d_Y(y_i, y_j)$}
		\Output{$\dis R_\varphi$, where $R_\varphi = \left\{(x, \varphi(x)); x \in X \right\}$ for some $\varphi:X \to Y$}
		%		\Statex
		\State $\pi \gets \textproc{GetRandomOrder}(|X|)$ \Comment{choose order in which to map the points in $X$}
		\State $R_\varphi^{(0)} \gets \emptyset$
		\For{$i = 1, \ldots, |X|$} \Comment{map $x_{\pi(i)}$ at step $i$}
		\State $\delta^{(i)} \gets \infty$ \Comment{initialize $\dis R_\varphi^{(i)}$}
		\State $j_i \gets 0$
		\For{$j = 1, \ldots, |Y|$} \Comment{find $y_{j_i}$ that minimizes $\dis \Big(R_\varphi^{(i-1)} \cup \big\{(x_{\pi(i)}, y_{j_i})\big\}\Big)$}
		\State $\delta \gets \delta^{(i-1)}$ 
		\For{$k = 1, \ldots, i-1$}\Comment{find $\dis \Big(R_\varphi^{(i-1)} \cup \big\{(x_{\pi(i)}, y_j)\big\}\Big)$}
		\If{$\delta < \big|\DX_{\pi(i), \pi(k)} - \DY_{j, j_k}\big|$}
		\State $\delta \gets \big|\DX_{\pi(i), \pi(k)} - \DY_{j, j_k}\big|$
		\EndIf
		\EndFor
		\If{$\delta < \delta^{(i)}$}\Comment{choose better $y_{j_i}$ candidate}
		\State $\delta^{(i)} \gets \delta$
		\State $j_i \gets j$
		\EndIf
		\EndFor
		\State $R_\varphi^{(i)} \gets R_\varphi^{(i-1)} \cup \{(x_{\pi(i)}, y_{j_i})\}$
		\EndFor
		\State \textbf{return} $\delta^{(|X|)}$
		\EndProcedure
		\Statex
		\Procedure{GetRandomOrder}{$n$} \Comment{$O(n)$}
		\Input{$n \in \N$}
		\Output{randomly generated permutation $\pi$ of $\I{n}$}
		\State \ldots
		\EndProcedure
		\hruleend
	\end{algorithmic}

\section{Procedure \textproc{FindUpperBound}}
	\label{FindUpperBound}
	\begin{algorithmic}
		\hrulestart
		\Procedure{FindUpperBound}{$\DX, \DY$} \Comment{\parbox[t]{.31\linewidth}{$O(sN^3)$, where $s$ is the total number of sampled mappings}}
		\Input{$\DX \in \R^{|X|\times |X|}$; $\DY \in \R^{|Y|\times |Y|}$}
		\Output{$b_U \in \R$ s.t. $\dmGH(X, Y) \leq b_U$}
		\State $\dis \varphi^* \gets \infty$
		\State $s \gets \textproc{DecideSampleSize}(|X|, |Y|)$ \Comment{number of mappings from $X \to Y$ to sample}
		\For{$i = 1, \ldots, s$} \Comment{randomized sampling from $X \to Y$}
		\State $\dis \varphi \gets \textproc{SampleSmallDistortion}(\DX, \DY)$
		\If{$\dis \varphi < \dis \varphi^*$}
		\State $\dis \varphi^* \gets \dis \varphi$
		\EndIf
		\EndFor
		\State $\dis \psi^* \gets \infty$
		\State $s \gets \textproc{DecideSampleSize}(|Y|, |X|)$ \Comment{number of mappings from $Y \to X$ to sample}
		\For{$i = 1, \ldots, s$} \Comment{randomized sampling from $Y \to X$}
		\State $\dis \psi \gets \textproc{SampleSmallDistortion}(\DY, \DX)$
		\If{$\dis \psi < \dis \psi^*$}
		\State $\dis \psi^* \gets \dis \psi$
		\EndIf
		\EndFor
		\State \textbf{return} $\frac{1}{2}\max \{\dis \varphi^*, \dis \psi^*\}$ %\Comment{\parbox[t]{0.34\linewidth}{$\dmGH(X, Y) \leq \frac{1}{2}\max\{\dis \varphi, \dis \psi\}\\ \forall \varphi:X \to Y, \psi:Y \to X$}}
		% \Comment{$\dmGH(X, Y) \leq \frac{1}{2}\max\{\dis \varphi\, \dis \psi\} \quad \forall \varphi:X \to Y, \psi:Y \to X$}
		\EndProcedure
		\Statex
		\Procedure{DecideSampleSize}{$n, m$} \Comment{$O(1)$}
		\Input{$n \in \N$; $m \in \N$}
		\Output{sample size $s \in \N$ for the population size of $m^n$}
		\State $\ldots$
		\EndProcedure
		\hruleend
	\end{algorithmic}

\end{appendix}

\end{document}
\endinput